\newcommand{\convh}{{\rm conv}}
\newcommand{\old}[1]{{}}
\def\eps{\varepsilon}
\def\rho{\varrho}
\providecommand{\intd}[0]%
{\;\mbox{d}}
\begin{document}

\title{Long non-crossing configurations in the plane}

\author[lab1]{Adrian Dumitrescu}{Adrian Dumitrescu}
\address[lab1]{Department of Computer Science,
University of Wisconsin--Milwaukee,
WI 53201-0784, USA}
\email{ad@cs.uwm.edu}

\author[lab2]{Csaba D. T\'oth}{Csaba D. T\'oth}
\address[lab2]{Department of Mathematics and Statistics,
University of Calgary, AB, Canada}
\email{cdtoth@ucalgary.ca}

\thanks{Adrian Dumitrescu was supported in part by NSF CAREER grant CCF-0444188.
Part of the research by this author was done at
Ecole Polytechnique F\'ed\'erale de Lausanne.
Csaba D. T\'oth was supported in part by NSERC grant RGPIN 35586. Part
of the research by this author was done at Tufts University.}

\keywords{Longest non-crossing Hamiltonian path,
longest non-crossing Hamiltonian cycle,
longest non-crossing spanning tree,
approximation algorithm.}
\subjclass{F.2.2 Geometrical problems and computations}

\begin{abstract}
We revisit several maximization problems for geometric networks design
under the non-crossing constraint, first studied by Alon, Rajagopalan
and Suri (ACM Symposium on Computational Geometry, 1993).
Given a set of $n$ points in the plane in general position (no three points
collinear), compute a longest non-crossing configuration composed of
straight line segments that is: (a) a matching  (b) a Hamiltonian path
(c) a spanning tree. Here we obtain new results for (b) and (c),
as well as for the Hamiltonian cycle problem:

%\smallskip
(i)  For the longest non-crossing Hamiltonian path problem,
we give an approximation algorithm with ratio $\frac{2}{\pi+1} \approx 0.4829$.
The previous best ratio, due to Alon et al., was $1/\pi \approx 0.3183$.
Moreover, the ratio of our algorithm is close to $2/\pi$ on a relatively broad
class of instances: for point sets whose perimeter (or diameter) is
much shorter than the maximum length matching.
The algorithm runs in $O(n^{7/3}\log{n})$ time.

%\smallskip
(ii) For the longest non-crossing spanning tree problem, we give an
approximation algorithm with ratio $0.502$ which runs in $O(n \log{n})$  time.
The previous ratio, $1/2$, due to Alon et al., was achieved by a
quadratic time algorithm. Along the way, we first re-derive the result
of Alon et al. with a faster $O(n \log{n})$-time algorithm and a very
simple analysis.

%\smallskip
(iii)  For the longest non-crossing Hamiltonian cycle problem,
we give an approximation algorithm whose ratio is close to $2/\pi$ on a
relatively broad class of instances: for point sets with the product
$\bf{\langle}$~diameter~$\times$ ~convex hull size $\bf{\rangle}$ much smaller
than the maximum length matching. The algorithm runs in
$O(n^{7/3}\log{n})$ time. No previous approximation results
were known for this problem.
\end{abstract}

\maketitle

\section{Introduction} \label{sec:intro}

Self-crossing in planar configurations is typically an undesirable
attribute. Many structures studied in computational geometry, in particular
those involving a minimization condition, have the non-crossing
attribute for free, for instance minimum spanning trees, minimum
length matchings, Voronoi diagrams, etc.
The non-crossing property usually follows from the triangle inequality.
Alon et al.~\cite{ARS95} have considered the problems of computing
{\rm (i)} the longest non-crossing matching,
{\rm (ii)} the longest non-crossing Hamiltonian path and
{\rm (iii)} the longest non-crossing spanning tree,
given $n$ points in the plane. Although they were unable to prove it,
they suspected that all these problems are $NP$-hard.
The survey articles by Eppstein~\cite[pp. 439]{Ep00} and Mitchell
\cite[pp. 680]{Mi00} list these as open problems in the area of
geometric network optimization.
The problem of approximating the longest non-crossing
Hamiltonian cycle is also of interest and wide open~\cite[pp. 338]{BE97}.

Without the non-crossing condition explicitly enforced, the problem of
minimizing or maximizing the length of a spanning tree, Hamiltonian
cycle or path, perfect matching, triangulation, etc. has a rich history.
However if such structures are required to be non-crossing much less
is known, in particular for the maximization variants.
While for minimization problems, the non-crossing property comes
usually for free via the triangle inequality, in contrast, for
maximization problems, the non-crossing property conflicts directly
with the length maximizing objective. This is another reason why
these problems are interesting to study.

\medskip
\noindent{\bf Related work.} The existence of non-crossing Hamiltonian
paths and cycles in geometric graphs has been studied in
\cite{ACF+08,CDJK07}.
Various Ramsey-type results for  non-crossing spanning trees, paths
and cycles have been obtained in~\cite{KPT97} and~\cite{KPTV98}.
The Euclidean MAX TSP, the problem of computing a longest
straight-line tour of a set of points, has been proven $NP$-hard in
dimensions three or higher~\cite{Fe99}, while its complexity in the
Euclidean plane remains open~\cite{Mi00}.
In contrast, the shortest non-crossing matching and the
shortest non-crossing spanning tree are both computable in polynomial time
\cite{Ep00,Mi00}, as they coincide with the shortest matching and the
shortest spanning tree respectively.

\medskip
\noindent{\bf Definitions and notations.}
A set $S$ of points in the plane is said to be in {\em  general
position} if no three points are collinear.
General position will be assumed throughout this paper.
Given a set of $n$ points in the plane, the results of Alon  al. are as follows:
{\rm (i)} A non-crossing matching whose total length is at least
$2/\pi$ of the longest (possibly crossing) matching
can be computed in $O(n^{7/3}\log{n})$ time.
{\rm (ii)} A non-crossing Hamiltonian path
whose total length is at least $1/\pi$ of the longest (possibly
crossing) Hamiltonian path can be computed in $O(n^{7/3}\log{n})$ time.
{\rm (iii)} A non-crossing spanning tree whose total length is at least
$n/(2n-2) \geq 1/2$ of the longest (possibly crossing) spanning tree
can be computed in $O(n^2)$ time.
%Their original results mention somewhat higher running times for (i)
%and (ii), which were based on the best upper bound at that time on the
%number of halving lines of a set of $n$ points.
The running times have been
adjusted to reflect the current best upper bound of $O(n^{4/3})$
on the number of halving lines as established by Dey~\cite{De98}.

A {\em geometric graph} $G$ is a pair $(V,E)$ where $V$ is a finite set
of points in general position in the plane, and $E$ is a set set of
segments (edges) connecting points in $V$. The {\em length} of $G$,
denoted $L(G)$, is the sum of the Euclidean lengths of all edges in
$G$. The graph $G$ is said to be {\em non-crossing} if its edges have
pairwise disjoint interiors (collinear triples of points are forbidden
in order to avoid overlapping collinear edges).

For a point set $S$, let $\convh(S)$ be the convex hull of $S$,
and let $P=P(S)$ denote the perimeter of $\convh(S)$.
Denote by $D=D(S)$ the diameter of $S$ and write $n=|S|$. Let
$M_\textrm{OPT}$ be a longest
(possibly crossing) matching of $S$, and let $M^*_\textrm{OPT}$ be a
longest non-crossing matching of $S$; observe that for odd $n$,
$M_\textrm{OPT}$ is a nearly perfect matching, with $(n-1)/2$ edges.
Let $H_\textrm{OPT}$ be a longest (possibly crossing) Hamiltonian path
of $S$, and let $H^*_\textrm{OPT}$ be a longest non-crossing Hamiltonian path
of $S$. Let $T_\textrm{OPT}$ be a longest (possibly crossing) spanning tree
of $S$, and let $T^*_\textrm{OPT}$ be a longest non-crossing spanning tree
of $S$. Finally, let $Q_\textrm{OPT}$ be a longest (possibly crossing)
Hamiltonian cycle of $S$, and let $Q^*_\textrm{OPT}$ be a longest
non-crossing Hamiltonian cycle of $S$. The following inequalities are obvious:
$L(M_\textrm{OPT}) \leq L(H_\textrm{OPT}) \leq L(T_\textrm{OPT})$.

Given a set $S$ of $n$ points in the plane,
a line $\ell$ going through two points of $S$ is called a
{\em halving line} if there are $\lfloor (n-2)/2 \rfloor$ points on one side and
$\lceil (n-2)/2 \rceil$ points on the other side~\cite{L71}.
A {\em bisecting line} $\ell$ of $S$ is any line that partitions the point
set evenly, i.~e., neither of the two open halfplanes defined by
$\ell$ contains more than $n/2$ points of $S$~\cite{E87}.
Observe that any halving line of $S$ is also a bisecting line of $S$.
Any bisecting line of $S$ yields (perhaps non-uniquely) a bipartition
$S=R \cup B$, with $R \cap B =\emptyset$, $||R|-|B|| \leq 1$,
with $R$ contained in one of the closed halfplanes determined by
$\ell$, and $B$ contained in the other.
We call $S=R \cup B$ a linearly separable bipartition, or balanced
partition of $S$. Observe that for any non-zero direction vector
$\vec{v}$, there is a bisecting line orthogonal to $\vec{v}$,
see~\cite[Lemma~4.4]{E87}. Two bisecting lines are called {\em equivalent}
if they can yield the same balanced partition of $S$.
It is well known that the number of non-equivalent bisecting lines
of a set is of the same order as the number of halving lines of the set,
and any balanced bipartition can be obtained from a halving
line~\cite[pp. 67]{E87}.

Our results are summarized in the following three
theorems\footnote{Due to space limitations, some proofs are omitted.}.

\begin{theorem} \label{T1}
{\rm (i) \ } For the longest non-crossing Hamiltonian path problem,
there is an approximation algorithm with ratio $\frac{2}{\pi+1} \approx 0.4829$
that runs in $O(n^{7/3}\log{n})$ time.\\
{\rm (ii) \ }
Given a set of $n$ points in the plane, one can
compute a  non-crossing Hamiltonian path $H$ in $O(n^{7/3}\log{n})$
time such that $L(H) \geq \frac{2}{\pi}L(H_\textrm{OPT})-\frac{P}{\pi}$.
In particular, if the point set satisfies the condition
$\frac{P}{\pi} \leq \delta L(H_\textrm{OPT})$ for some small $\delta>0$,
then $L(H) \geq (\frac{2}{\pi}-\delta)L(H_\textrm{OPT})$.\\
{\rm (iii) \ } Alternatively, one can compute a  non-crossing Hamiltonian path
$H$ in $O(n \log{n}/\sqrt{\eps})$ time,
such that $L(H) \geq (1-\eps)\frac{2}{\pi}L(H_\textrm{OPT})-\frac{P}{\pi}$.
\end{theorem}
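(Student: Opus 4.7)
The plan for part~(ii) extends Alon et al.'s halving-line matching framework so that it outputs a non-crossing Hamiltonian path, with the length bound following from an averaging argument in which Cauchy's perimeter formula absorbs the $P/\pi$ term. For each direction $\theta$, fix a halving line $\ell_\theta$ perpendicular to $\theta$, inducing a balanced bipartition $U_\theta \cup L_\theta$. Rank-pair the two sides by their position along $\ell_\theta$ to obtain a non-crossing matching $M_\theta$; its length is at least $Y_\theta = \sum_{x \in S} |\theta(x)|$, the total $\theta$-distance from $\ell_\theta$. Number the matching edges left-to-right along $\ell_\theta$ and stitch them into a Hamiltonian path $H_\theta$ by inserting connectors that alternate between the two sides, each routed along a convex-hull arc of its side so that no crossings are introduced.

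For the length estimate, I would combine the pointwise inequality $L(H_\theta) \ge \sum_{e \in H_\textrm{OPT}} |\pi_\theta(e)| - w_\theta$ (where $w_\theta$ is the width of $S$ in direction $\theta$, and $\pi_\theta(e)$ denotes the $\theta$-projection of $e$) with the identities $E_\theta \bigl[\sum_{e \in H_\textrm{OPT}} |\pi_\theta(e)|\bigr] = \frac{2}{\pi} L(H_\textrm{OPT})$ and $E_\theta[w_\theta] = P/\pi$ from Cauchy's formula. The best $\theta$ among the $O(n^{4/3})$ combinatorially distinct halving directions then satisfies $L(H_\theta) \ge \frac{2}{\pi} L(H_\textrm{OPT}) - \frac{P}{\pi}$; each iteration costs $O(n \log n)$, for a total running time of $O(n^{7/3} \log n)$.

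Part~(i) combines the algorithm of (ii) with a complementary one producing a non-crossing Hamiltonian path whose length is at least roughly $P$ (for instance, by walking along the convex hull and threading in the interior points with short detours), then returns the longer output. A two-case analysis yields the ratio $\frac{2}{\pi+1}$: whenever the bound from (ii) falls short of $\frac{2}{\pi+1} L(H_\textrm{OPT})$, the perimeter $P$ must exceed $\frac{2}{\pi+1} L(H_\textrm{OPT})$ and the perimeter-based path wins. Part~(iii) samples $\Theta(1/\sqrt{\eps})$ directions rather than enumerating all halving lines, exploiting the $\theta$-Lipschitz continuity of the averaged quantities to keep the approximation loss within a factor of $1-\eps$.

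The hardest step is the pointwise inequality underlying the $\frac{2}{\pi}$ factor: obtaining it appears to require that the connector edges of $H_\theta$ also contribute non-trivially to the $\theta$-projection lower bound, not just the matching edges, since simply feeding $L(M_\textrm{OPT}) \ge \tfrac{1}{2} L(H_\textrm{OPT})$ into Alon et al.'s matching bound yields only the weaker $\frac{1}{\pi}$ factor. A secondary but essential geometric check is that the convex-hull connector routing in the path-construction step is genuinely non-crossing, which rests on a compatibility between the left-to-right order of matching edges along $\ell_\theta$ and the cyclic order of points on the convex hull of each side.
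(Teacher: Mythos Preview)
Your high-level strategy---projection onto the direction perpendicular to a halving line, averaging over directions, Cauchy's formula to produce the $-P/\pi$ term, balancing against a perimeter-length path for part~(i), and direction sampling for part~(iii)---matches the paper exactly, including the two-case balancing with threshold $c=(\pi+1)/2$.

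The gap is precisely where you flag it: the path construction. Your matching-plus-connectors scheme, with connectors ``routed along a convex-hull arc of its side,'' does not deliver the pointwise inequality. If the connectors stay on one side of $\ell_\theta$, then only the $n/2$ matching edges cross $\ell_\theta$, and each point is incident to exactly one crossing edge; the total $\theta$-projection of the crossing edges is therefore $\sum_i |x_i|$, not $2\sum_i|x_i|$. Same-side connectors contribute only $\bigl||x_i|-|x_j|\bigr|$, which can vanish. This is exactly the factor-of-two loss you anticipate, and it is real for your construction. Separately, routing a connector along a convex-hull arc is not a single straight segment and would revisit vertices, so the object described is not a straight-line Hamiltonian path.

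The paper sidesteps this by not building from a matching. It invokes the algorithm of Abellanas, Garc\'{\i}a, Hern\'andez, Noy and Ramos, which for any linearly separable balanced bipartition $S=R\cup B$ directly outputs a non-crossing \emph{alternating} spanning path: every one of its $n-1$ edges crosses $\ell$, with crossing points monotone along $\ell$. Because every edge $p_{\sigma(i)}p_{\sigma(i+1)}$ has its endpoints on opposite sides, its projection perpendicular to $\ell$ is $|x_{\sigma(i)}|+|x_{\sigma(i+1)}|$, and summing gives $L(H_\alpha)\ge 2\sum_i|x_i|-|x_{\sigma(1)}|-|x_{\sigma(n)}|$. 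For $n$ even the two path endpoints lie on opposite sides, so $|x_{\sigma(1)}|+|x_{\sigma(n)}|\le W_\alpha$; rewriting $2\sum_i|x_i|=\sum_{j=1}^{n-1}(|x_j|+|x_{j+1}|)+|x_1|+|x_n|$ and using $|x_j|+|x_{j+1}|\ge |p_jp_{j+1}|\,|\cos(\beta_{j,j+1}-\alpha)|$ (with $p_1,\ldots,p_n$ the optimal-path order) yields the pointwise inequality you wrote down. The averaging argument then proceeds exactly as you outlined. In short, the missing ingredient is a \emph{fully alternating} non-crossing path in which every edge crosses the halving line; once you have that, the connectors \emph{are} themselves crossing edges and automatically contribute the other half of the projection.
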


\begin{theorem} \label{T2}
For the longest non-crossing spanning tree problem for
a given set of $n$ points in the plane, there is an
approximation algorithm with ratio $0.502$ and $O(n \log{n})$  running
time. More precisely, the algorithm computes a non-crossing spanning
tree $T$ such that %\linebreak
$L(T) \geq 0.502 \cdot L(T_\textrm{OPT})$.
\end{theorem}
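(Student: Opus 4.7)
The plan is to first re-derive the Alon \etal ratio of $\tfrac12$ with a very simple $O(n\log n)$ algorithm and analysis, and then squeeze out the additional $0.002$ by handling a single edge case. For the base construction, compute a diametral pair $(p,q)$ of $S$ in $O(n\log n)$ time via rotating calipers on the convex hull, and consider the two stars $T_p$ and $T_q$ centred at $p$ and at $q$; each star is trivially non-crossing because any two of its edges share the central vertex. Every spanning tree has $n-1$ edges of Euclidean length at most $D$, so $L(T_\textrm{OPT}) \leq (n-1)D$. By the triangle inequality, $|pv|+|qv| \geq |pq| = D$ for every $v \in S$, whence
\[
L(T_p) + L(T_q) \;=\; \sum_{v \in S \setminus \{p,q\}} \bigl(|pv|+|qv|\bigr) + 2D \;\geq\; (n-2)D + 2D \;=\; nD.
\]
Outputting the longer of the two stars therefore gives a tree of length at least $nD/2 \geq \tfrac{n}{2(n-1)} L(T_\textrm{OPT}) \geq \tfrac12 L(T_\textrm{OPT})$; this re-proves the Alon \etal bound in $O(n\log n)$ time with a one-line analysis.

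To boost this to $0.502$, let $\Delta := L(T_p) + L(T_q) - nD \geq 0$ measure the slack in the triangle-inequality step. If $\Delta \geq \alpha\, nD$ for a small absolute constant $\alpha$ to be tuned, then $\max(L(T_p),L(T_q)) \geq \tfrac12(nD+\Delta)$ already delivers ratio strictly exceeding $0.502$ once $\alpha$ is of order $0.004$. In the remaining regime $\Delta < \alpha\, nD$, the excess $|pv|+|qv|-D$ is tiny on average, forcing almost every $v \in S$ into a very thin lens around the segment $pq$. I would then build a second candidate tree $T'$ tailored to this near-collinear structure: introduce an auxiliary centre $r \in S$ maximising the distance to the line through $p$ and $q$ (obtainable in $O(n\log n)$ time from the convex hull), assign each remaining point $v$ to whichever of $\{p,q,r\}$ lies farthest from it, and tie the three groups together via two of the three backbone edges $pq$, $pr$, $qr$. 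The final algorithm returns the longer of the base star and $T'$, and the overall running time stays $O(n\log n)$ because each candidate is built in linear time after the convex-hull preprocessing.

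The main obstacle will be twofold. First, the quantitative analysis of $T'$ in the thin-strip regime: on the one hand, one must sharpen the naive bound $L(T_\textrm{OPT}) \leq (n-1)D$, since in a thin strip only $O(n)$ pairs can sit at distance close to $D$ (by the classical extremal bound on near-diameter pairs), so a constant fraction of the $n-1$ edges of $T_\textrm{OPT}$ must be noticeably shorter than $D$; on the other hand, one must extract an $\Omega(nD)$ gain for $L(T')$ from the perpendicular height of $r$ above the line $pq$, combined with the backbone edges. Balancing these two opposing bounds --- one exploiting how flat the configuration is, the other exploiting how elevated the auxiliary centre $r$ can be --- is what pins down both the threshold $\alpha$ and the precise constant $0.502$. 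Second, ensuring that the three-centre assignment produces a genuinely \emph{non-crossing} $T'$ requires care, because the straightforward ``farthest-centre'' partition may introduce crossings between edges emanating from different centres; an angular-sector assignment rule around $\triangle pqr$, possibly combined with local swaps that preserve total length, should repair any such crossing without affecting the length estimate.
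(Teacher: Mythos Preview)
Your re-derivation of the $\tfrac12$ ratio via the two diameter stars is exactly what the paper does (its Lemma~\ref{L3}), and your dichotomy on the slack $\Delta$ is morally the paper's Case~1 versus the rest. The gap is entirely in the ``thin-strip'' branch, and it is a real one. Take the prototypical hard instance: $n/2$ points clustered within $\varepsilon D$ of $p$ and $n/2$ within $\varepsilon D$ of $q$. Here $\Delta \approx 0$, so you are in your second case. But now \emph{both} of your intended levers fail simultaneously. First, your improved upper bound on $L(T_\textrm{OPT})$ is false: the optimal tree can be an alternating Hamiltonian path between the two clusters with all $n-1$ edges of length $\approx D$, so $L(T_\textrm{OPT}) \approx (n-1)D$ --- the ``classical extremal bound on near-diameter pairs'' you invoke applies only to exact-diameter pairs, not to $(1-\varepsilon)D$-pairs, of which there are $\Theta(n^2)$ here. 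Second, the auxiliary centre $r$ has perpendicular height $O(\varepsilon D)$, so there is no gain to be had from it; and even ignoring $r$, the natural ``farthest-centre'' assignment sends the $p$-cluster to $q$ and the $q$-cluster to $p$, producing $\Theta(n^2)$ crossings. You acknowledge this last point, but ``angular sectors plus length-preserving local swaps'' is not a fix: uncrossing two edges in a bipartite star configuration strictly \emph{decreases} total length by the triangle inequality, so you cannot repair the crossings without destroying the length estimate.

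The paper handles precisely this two-cluster instance with a genuinely different structure, the \emph{extended star} $E_a$ (resp.\ $E_b$): first build the star $S'_a$ from $a$ to all points in the far strip $V_b$; its edges cut the remaining region into convex wedges, and within each wedge the points are attached either to $a$ or to the wedge's far tip, whichever is longer in total. This is non-crossing by construction (everything stays inside convex wedges) and captures length $\geq \tfrac{1+w}{4}(n+n_b)$ (Lemma~\ref{L6}). The full analysis then needs \emph{four} cases and five candidate structures $S_a,S_b,S_h,E_a,E_b$ with constants $\delta=0.05$, $w=0.6$, $t=0.6$, $z=0.48$ tuned so that every case clears $0.502$; in particular the upper-bound improvement on $L(T_\textrm{OPT})$ (Lemmas~\ref{L4} and~\ref{L8}) is used only in Case~4, where many points sit in the middle strip and genuinely cannot reach distance $D$. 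Your single three-centre tree $T'$ cannot substitute for this case split.
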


\old{
Although our improvement in the approximation ratio for spanning trees
is very small, it shows that the ``barrier'' of $1/2$ can be broken.
Also, while from a practical standpoint the improvement
in the running time is the most significant aspect, from a theoretical
perspective the improvement in the approximation ratio is the most
challenging part of our result.
} % old

\begin{theorem} \label{T3}
Given a set $S$ of $n$ points in the plane, with $|\convh(S)|=h$: \\
{\rm (i) \ } One can
compute a  non-crossing Hamiltonian cycle $Q$ in $O(n^{7/3}\log{n})$
time such that $L(Q) \geq \frac{2}{\pi}L(Q_\textrm{OPT})-(2h-1) \frac{P}{\pi}$.
In particular, if the point set satisfies the condition
$(2h-1) \frac{P}{\pi} \leq \delta L(Q_\textrm{OPT})$
%(respectively, $(h+2)D \leq \delta L(Q_\textrm{OPT})$)
for some small $\delta>0$,
then $L(Q) \geq \left(\frac{2}{\pi}-\delta\right) L(Q_\textrm{OPT})$. \\
{\rm (ii)\ } Alternatively,  one can compute a non-crossing
Hamiltonian cycle $Q$ in  $O(n^3 \log{n})$ time such that $L(Q) \geq
\frac{2}{\pi}L(Q_\textrm{OPT})- (h+2) \frac{P}{\pi}$. \\
{\rm (iii)\ } Alternatively, one can compute a  non-crossing
Hamiltonian cycle $Q$ in $O(n \log{n}/\sqrt{\eps})$ time, such that
$L(Q) \geq (1-\eps)\frac{2}{\pi}L(Q_\textrm{OPT})-(2h-1) \frac{P}{\pi}$.
\end{theorem}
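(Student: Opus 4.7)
The plan is to extend the rotational-projection technique that gives the $2/\pi$ bound for the Hamiltonian path (Theorem~\ref{T1}(ii)) to the Hamiltonian cycle setting. The extra error term $(2h-1)P/\pi$ will capture the cost of closing a zig-zag Hamiltonian path into a non-crossing cycle by routing through vertices of $\convh(S)$.

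For each direction $\theta \in [0,\pi]$ with $\vec{v}_\theta = (\cos\theta,\sin\theta)$, fix a bisecting line $\ell_\theta$ orthogonal to $\vec{v}_\theta$, inducing a balanced partition $S = R \cup B$. Order $R$ and $B$ by coordinate along $\ell_\theta$ and form the non-crossing zig-zag Hamiltonian path $H_\theta$ used in the proof of Theorem~\ref{T1}(ii). Then close $H_\theta$ into a non-crossing Hamiltonian cycle $Q_\theta$ by deleting a few edges near the endpoints of $H_\theta$ and inserting a chain of at most $2h-1$ edges that walks along the boundary of $\convh(S)$. Writing $p_\theta(\cdot) = \sum_{e} |e \cdot \vec{v}_\theta|$ for projected length and $w_\theta(S)$ for the width of $\convh(S)$ in direction $\vec{v}_\theta$, the Hamiltonian-path projection analysis gives $p_\theta(H_\theta) \ge p_\theta(Q_{\textrm{OPT}})$, while the rerouting introduces at most $2h-1$ edges of projection $\le w_\theta(S)$, hence
\[
p_\theta(Q_\theta) \;\ge\; p_\theta(Q_{\textrm{OPT}}) - (2h-1)\, w_\theta(S).
\]

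Using $\int_0^\pi |\cos\alpha|\,d\alpha = 2$ one has $\int_0^\pi p_\theta(Q_{\textrm{OPT}})\,d\theta = 2L(Q_{\textrm{OPT}})$, while Cauchy's perimeter formula gives $\int_0^\pi w_\theta(S)\,d\theta = P$. Averaging the displayed inequality over $\theta$ therefore produces a direction $\theta^*$ with $L(Q_{\theta^*}) \ge p_{\theta^*}(Q_{\theta^*}) \ge (2/\pi) L(Q_{\textrm{OPT}}) - (2h-1)P/\pi$, which is part~(i); enumerating the $O(n^{4/3})$ halving directions and constructing $Q_\theta$ in $O(n\log n)$ time each yields the claimed $O(n^{7/3}\log n)$ total. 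Part~(ii) uses a smarter closing procedure needing only $h+2$ hull edges, at the cost of searching over the $\Theta(n^2)$ candidate directions defined by point pairs, giving $O(n^3 \log n)$ time. Part~(iii) samples only $O(1/\sqrt{\eps})$ well-chosen directions and loses a $(1-\eps)$ factor on $2/\pi$.

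The hardest step will be the cycle-closing construction. The two endpoints of the zig-zag path $H_\theta$ generally lie in the interior of $\convh(S)$, so joining them directly would create crossings with $H_\theta$. One must reroute through hull vertices so that (a) planarity of the resulting cycle is preserved, (b) only $O(h)$ new edges are used, and (c) each such edge projects onto $\vec{v}_\theta$ by at most $w_\theta(S)$. The tightening from $2h-1$ to $h+2$ in part~(ii) is the most delicate piece, since it requires amortizing the rerouting globally over all hull vertices rather than charging $\Theta(h)$ detour edges to each endpoint separately.
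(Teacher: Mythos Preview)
Your projection/averaging framework and the running-time accounting are fine, but the heart of the argument—the cycle-closing construction—does not work as you describe, and the paper resolves it by a genuinely different decomposition.

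You build the alternating path $H_\theta$ on \emph{all} of $S$ and then propose to ``insert a chain of at most $2h-1$ edges that walks along the boundary of $\convh(S)$.'' But the hull vertices are already interior vertices of $H_\theta$ (each of degree $2$, except possibly the two endpoints), so they are not available for a detour; any walk along $\partial\convh(S)$ would revisit vertices and force you to delete and reinsert $\Theta(h)$ edges scattered throughout $H_\theta$, with no control over planarity. Also, the zig-zag path you describe (``order $R$ and $B$ by coordinate along $\ell_\theta$'') is not the Abellanas et~al.\ construction and is in general self-crossing; the actual construction repeatedly peels off the top bridge of the remaining points.

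The paper sidesteps both issues by splitting $S=S'\cup S''$ into hull vertices $S'$ (with $|S'|=h$) and interior points $S''$, and running the alternating-path construction on $S''$ \emph{only}. A strengthened path lemma (Lemma~\ref{L9}) grows the path simultaneously from the top and bottom bridges of $\convh(S'')$, guaranteeing that \emph{both} endpoints of the resulting path lie on $\convh(S'')$; the Abellanas et~al.\ path guarantees this for only one endpoint, which is why the paper cannot simply reuse it (Fig.~\ref{abell}). Those two endpoints are then joined to one or two vertices of $S'$ to form an initial simple cycle, and the remaining hull vertices—each lying outside the current cycle's convex hull—are inserted one at a time via Lemma~\ref{L11}, which only increases the length. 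The $(2h-1)P/\pi$ loss comes from comparing $2\sum_{p_i\in S''}|x_i|$ to $2\sum_{p_i\in S}|x_i|$ (losing the $h$ hull contributions) together with the endpoint correction, not from ``$2h-1$ detour edges.''

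Your sketch of part~(ii) is also off: the improvement to $h+2$ is not obtained by enumerating $\Theta(n^2)$ directions.
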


\section{The Hamiltonian path} \label{sec:path}

In this section we prove Theorem~\ref{T1}. Let $S=\{p_1,\ldots,p_n\}$.
We follow an approach similar to that of Alon et al. using projections and
an averaging argument, in conjunction with a result on bipartite
embeddings of spanning paths in the plane. Abellanas et
al.~\cite[Theorem~3.1]{AGH+99} showed that every linearly separable bipartition
$S=R \cup B$ with $||R|-|B|| \leq 1$, admits an alternating non-crossing
spanning path such that the edges cross any separating line $\ell$ at points
ordered monotonically along $\ell$. Such a Hamiltonian path can be computed
in $O(n \log{n})$ time. Their algorithm computes the same Hamiltonian path for
any two equivalent halving lines, that is, the alternating path depends on the
bipartition only rather than the separating line.

We now recall the algorithm of Abellanas et al.~\cite{AGH+99};
see Fig.~\ref{abell} for an example.
Let $S=R \cup B$ with $||R|-|B|| \leq 1$ be the red-blue bipartition
given by a vertical line $\ell$: $R$ on the left, $B$ on the right.
Their algorithm constructs an alternating path $A$ in the following way:
Let $rb$ be the top red-blue edge of
the convex hull $\convh(S)$, called the {\em top bridge}. If $|R|>|B|$,
set $A := \{r\}$, if $|R|<|B|$, set $A := \{b\}$, else
set $A$ to $\{r\}$ or $\{b\}$ arbitrarily. At every step, recompute
the top bridge $rb$ of $S \setminus A$, and add $r$ to $A$
if the last point in $A$ was blue, or add $b$ to $A$
if the last point in $A$ was red.  As pointed out by the authors, the
resulting path $A$ is non-crossing because $A$ is disjoint from the
convex hull of $S \setminus A$ at each step.

We improve the lower bound of Alon et al. by computing the longest
Hamiltonian path corresponding to a bipartition and a Hamiltonian path of
length at least the perimeter of the convex hull, and returning the
longest of the two.
\begin{lemma} \label{L1}
For a point set $S$, $|S| =n \geq 31$, a non-crossing Hamiltonian
path of length at least $P(S)$ can be computed in $O(n \log{n})$
time. The bound on the length is best possible.
\end{lemma}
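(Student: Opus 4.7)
My plan is to build a non-crossing Hamiltonian cycle $Q$ of large total length and then delete its shortest edge to obtain the desired path. First, I would compute the convex hull $v_1,\ldots,v_h$ of $S$ and its perimeter $P$ in $O(n\log n)$ time. Starting from the hull cycle, I would insert the $n-h$ interior points one at a time, each into a non-crossing slot of the current cycle (such a slot always exists because the current cycle bounds a simple polygon containing the point). The resulting Hamiltonian cycle $Q$ is non-crossing, and because its bounding polygon contains $\convh(S)$, the classical fact that the perimeter of any simple polygon is at least the perimeter of its convex hull gives $L(Q)\ge P$.

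Second, I would delete the shortest edge $e^*$ of $Q$, producing a non-crossing Hamiltonian path $H$ with $L(H)=L(Q)-|e^*|\ge L(Q)\,(n-1)/n$. To push this up to $L(H)\ge P$ I need the slack $L(Q)-P\ge|e^*|$. I would establish this by a case analysis. If some single insertion contributes a detour of length at least $|e^*|$, the inequality is immediate. Otherwise every insertion contributes only a tiny detour, which forces every interior point to lie in a very thin strip around the hull boundary; this effectively reduces the problem to the convex-position case treated below. The hypothesis $n\ge 31$ enters here through the bound $|e^*|\le L(Q)/n$, making the buffer to be accounted for at most $P/30$, a quantitatively manageable target.

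The main obstacle will be the convex-position case $h=n$, for which the greedy cycle is exactly the hull and $L(Q)=P$ with no slack at all. For this case I would abandon the greedy construction and build a zigzag Hamiltonian path directly: use a continuous sweep along the hull to find a chord $v_a v_b$ that splits the hull into two arcs of perimeter $\approx P/2$, and interleave vertices from the two arcs into a single path. On a convex polygon such a zigzag is non-crossing, and a length accounting in which each zigzag edge spans across the chord and the $n-1$ edges collectively project onto every hull edge at least once gives total length $\ge P$ provided $n\ge 31$; the threshold $31$ falls out of making this accounting tight. Finally, for the ``best possible'' claim, I would exhibit a family of point sets consisting of $n-1$ nearly collinear points on a short base plus a single apex of growing height $H$: the hull perimeter tends to $2H$ while every non-crossing Hamiltonian path has length $P(1+o(1))$ as $H\to\infty$, confirming that the multiplicative constant $1$ in the bound cannot be improved.
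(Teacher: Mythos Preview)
The paper omits the proof of this lemma (the footnote states that some proofs are left out due to space), so there is no text to compare against directly. On its own merits, however, your proposal has several genuine gaps.

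First, two of your basic claims about the greedy cycle are incorrect as stated. After you insert an interior point $q$ into an edge $p_ip_{i+1}$, the new simple polygon loses the triangle $p_iqp_{i+1}$ from its interior; any remaining interior point that happened to lie in that triangle is now \emph{outside} the current polygon, so the justification ``the current cycle bounds a simple polygon containing the point'' fails from the second insertion on. Likewise, the resulting simple polygon does \emph{not} contain $\convh(S)$ (it is contained in it), so you cannot invoke ``perimeter of a superset is larger''. The inequality $L(Q)\ge P$ is true, but for a different reason: each insertion replaces an edge by a longer two-edge detour (triangle inequality), so the length only goes up from the initial value $P$. You also do not explain how to find a valid insertion slot in $o(n)$ time; the naive scan gives $O(n^{2})$, not $O(n\log n)$.

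Second, and more seriously, your case analysis after deleting the shortest edge is only a sketch. Saying ``every insertion contributes only a tiny detour, which forces every interior point to lie in a very thin strip around the hull boundary; this effectively reduces the problem to the convex-position case'' is not an argument: a small detour $|p_iq|+|qp_{i+1}|-|p_ip_{i+1}|$ only says $q$ is close to the \emph{segment chosen by the greedy rule}, not close to the hull boundary, and ``reduces to convex position'' is not made precise. For the convex-position case itself you assert that a zigzag path has length $\ge P$ once $n\ge 31$ and that ``the threshold $31$ falls out of making this accounting tight'', but you give neither the construction in detail nor the computation that produces $31$; this is the heart of the lemma and cannot be left as a remark. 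Your tightness example (one far apex, $n-1$ clustered points) is fine and does show the constant $1$ in front of $P(S)$ cannot be improved, but the existence/algorithmic half needs a complete argument.
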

%
%\begin{proof}
%\end{proof}

Consider a geometric graph $G=(V,E)$, and a point $q \notin V$,
so that $V \cup \{q\}$ is in general position.
We say that $q$ {\em sees a vertex} $v \in V$ if the segment $q v$  does not intersect any
edge of $G$. Similarly, we say that $q$ {\em sees an edge} $e \in E$,
if the triangle formed by $v$ and $e$ does not intersect any other edge of
$G$.
We make use of the fact that if $n$ is even then the two endpoints of an
alternating path are on opposite sides of the separating line $\ell$.
If $n$ is odd, we first construct an alternating path for a specific subset of
$n-1$ points, and then augment it to a Hamiltonian path on all $n$ points using
the following lemma.

\begin{lemma} \label{L2}
Let $S=R \cup B$ with $||R|-|B|| \leq 1$, be a linearly separable bipartition
given by line $\ell$. Let $q \in S$, and $A'$ be a non-crossing alternating
path on $S \setminus \{q\}$ such that its $($consecutive$)$ edges cross $\ell$ at
points ordered monotonically along $\ell$. Then $q$ sees one edge of
$A'$ and consequently, $A'$ can be extended to a Hamiltonian path $A$ on $S$, with
$L(A')<L(A)$. The path $A$ can be computed in $O(n)$ time, given $A'$.
\end{lemma}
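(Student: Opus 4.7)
The plan is to identify an edge $e$ of $A'$ visible from $q$ and then construct $A$ by replacing $e$ with the two segments from $q$ to the endpoints of $e$; strict triangle inequality in general position immediately gives $L(A)>L(A')$.

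Set up coordinates so that $\ell$ is the $y$-axis with $R$ in the left half-plane, and WLOG take $q\in R$. Because $n=|S|$ is odd, $|V(A')|=n-1$ is even and the endpoints of $A'$ lie on opposite sides of $\ell$, so every edge of $A'$ crosses $\ell$. Writing $A'=v_0v_1\cdots v_{n-2}$, $e_i=v_{i-1}v_i$, $c_i=e_i\cap\ell$, the hypothesis gives $c_1<c_2<\cdots<c_{n-2}$ in $y$-coordinate. In the left half-plane $A'$ consists of an initial half-edge $v_0c_1$ (assuming $v_0\in R$) together with a \emph{V-shape} at each interior red vertex $v_{2k}$ formed by the two incident edges $e_{2k},e_{2k+1}$; each V, together with the sub-segment $[c_{2k},c_{2k+1}]\subset\ell$, bounds a triangle $D_{2k}$ with vertices $c_{2k},v_{2k},c_{2k+1}$. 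By monotonicity the supports $[c_{2k},c_{2k+1}]$ are pairwise disjoint on $\ell$; combined with the non-crossing property of $A'$, this forces the triangles $\{D_{2k}\}$ to have pairwise disjoint interiors, and since $c_1<c_{2k}$ for every $k\ge 1$, the initial half-edge lies outside every $D_{2k}$ as well.

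Case 1: some (by disjointness, unique) $D_{2k^*}$ contains $q$. Its interior is disjoint from $A'$, so I look for a visible edge among the two bounding it: $e_{2k^*}$ and $e_{2k^*+1}$. For each of the right-half-plane points $v_{2k^*-1},v_{2k^*+1}$, the convexity of $D_{2k^*}$ and the fact that $q\in D_{2k^*}$ place $q$ in the cone from that point subtending $D_{2k^*}$ (bounded by the tangent rays through $c_{2k^*}$ and $c_{2k^*+1}$); hence the segments $qv_{2k^*-1}$ and $qv_{2k^*+1}$ meet $\ell$ at points of $[c_{2k^*},c_{2k^*+1}]$, making the left-half-plane portion of either candidate inserting triangle $\Delta qv_{2k^*-1}v_{2k^*}$ or $\Delta qv_{2k^*+1}v_{2k^*}$ lie in $D_{2k^*}$ and therefore miss every edge of $A'$. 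On the right, the $\ell$-base of each candidate lies inside $[c_{2k^*},c_{2k^*+1}]$, an interval disjoint (by monotonicity) from every right-side V-support $[c_{2j-1},c_{2j}]$, so no right-side V crosses $\ell$ into the candidate triangle. A short angular argument then shows that at most one of $v_{2k^*-1},v_{2k^*+1}$ can be engulfed by the opposing candidate, so at least one of the two candidates is free of $A'$-apexes on the right side as well and is therefore visible from $q$. Case 2, where $q$ lies outside every $D_{2k}$, is handled analogously using an edge incident to the topmost or bottommost V (or the initial half-edge) depending on which component of the outer region contains $q$.

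The main obstacle is precisely this final angular step -- ruling out that both candidate triangles engulf an apex of a neighboring right-side V -- which relies crucially on the monotone ordering of the crossings on $\ell$. Once the visible $e=v_{i-1}v_i$ is identified (via a linear scan of $c_1,\ldots,c_{n-2}$ to find $D_{2k^*}$, plus $O(1)$ orientation tests to choose between $e_{2k^*}$ and $e_{2k^*+1}$), the extension takes $O(n)$ time, and strict triangle inequality in general position yields $L(A)>L(A')$.
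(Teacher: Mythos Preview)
The paper omits the proof of this lemma (see the footnote about space limitations), so there is no reference argument to compare against. Your structural setup --- monotone crossings $c_1<\cdots<c_{n-2}$, left-side V-triangles $D_{2k}=\triangle c_{2k}v_{2k}c_{2k+1}$ with pairwise disjoint interiors, and a case split on which $D_{2k^*}$ contains $q$ --- is sound and is almost certainly the intended skeleton.

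The argument, however, is incomplete precisely at its crux. In Case~1 you correctly show that the left-half-plane portion of each candidate triangle lies in $D_{2k^*}$; but your right-half-plane analysis does not close. The observation that the candidate's $\ell$-base lies inside $[c_{2k^*},c_{2k^*+1}]$ only prevents right-side V-\emph{edges} from crossing $\ell$ into the candidate; it does not rule out some right-side apex $v_{2j-1}$ (for $j$ not necessarily $k^*$ or $k^*+1$) lying inside the right portion of the candidate, with its two incident half-edges exiting through the new side $qv_{2k^*-1}$. You then assert that only the immediately adjacent apexes $v_{2k^*\pm1}$ can obstruct and that a ``short angular argument'' eliminates one of the two candidates --- but neither assertion is justified. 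Both are in fact true, yet proving them requires repeatedly invoking the non-crossing property against \emph{both} edges incident to $v_{2k^*-1}$ (and symmetrically $v_{2k^*+1}$), not a single disjointness remark. Case~2 is deferred entirely to ``analogously''; the geometry there is genuinely different (e.g.\ $q$ above all V-triangles, or in a left-side gap opposite a right-side V, or below the initial half-edge $v_0c_1$) and needs its own treatment. Finally, note that in every application the paper makes of this lemma, $q$ lies \emph{on} the separating line $\ell$; your reduction ``WLOG $q\in R$'' does not cover the case that is actually used.
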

%
%\begin{proof}
%\end{proof}

Fix a Cartesian coordinate system $\Gamma$.
Let $k$ be the number of halving lines of $S$, denote the angles they
make with the $x$-axis of $\Gamma$ by $0 \leq \alpha_1 < \ldots \alpha_k < \pi$.
By relabeling the points assume that the optimal path is
$H_\textrm{OPT} = p_1,p_2,\ldots,p_n$.
For two points $p_i, p_j \in S$, let $\beta_{ij}$ be the angle in $[0,\pi)$
formed by the line through $p_i p_j$ and the $x$-axis.
If $n$ is odd, then a bisecting line of direction $\alpha$ (for any $\alpha$)
must be incident to at least one point of $S$, and denote an arbitrary such
point by $q_\alpha$.

\smallskip
\noindent Algorithm {\bf A1}:\\
{\sc Step 1}. Compute a non-crossing
Hamiltonian path $H_1$ of length at least $P(S)$, by Lemma~\ref{L1}.\\
{\sc Step 2}. If $n$ is even, then for all non-equivalent bisections of
$S$ (i.e., for all balanced bipartitions of $S$), compute a non-crossing
alternating path using the algorithm of Abellanas et al.~\cite{AGH+99}, and
let the longest such path be $H_2$. If $n$ is odd, then for all non-equivalent
bisections of $S$, compute a non-crossing
alternating path of the even point set $S\setminus \{q_\alpha\}$ using the
algorithm of~\cite{AGH+99} and let the longest such path be $H_2'$.
Augment $H_2'$ with vertex $q_\alpha$ by Lemma~\ref{L2} to a Hamiltonian path
$H_2$.\\
{\sc Step 3}. Output the longest of the two paths $H_1$ and $H_2$.

\smallskip
By Lemma~\ref{L1}, the running time of {\sc Step 1} is
$O(n \log{n})$. Since the number of halving lines of an $n$-element
point set is $O(n^{4/3})$ and all can be generated within this time~\cite{De98},
the running time of {\sc Step 2} is $O(n^{7/3}\log{n})$, consequently
the total running time of {\bf A1} is also $O(n^{7/3}\log{n})$.
%\smallskip

We proceed with the analysis of the approximation ratio.
For simplicity, we assume that $n$ is even. The case of $n$ odd is
slightly different.
For each $\alpha \in [0,\pi)$, let $\Gamma_\alpha$
be a (rotated) coordinate system, obtained from $\Gamma$ via a
counterclockwise rotation by $\alpha$, and with the $y$-axis dividing
evenly the point set $S$. Let $x_i$ be the $x$-coordinate of point
$p_i$ with respect to $\Gamma_\alpha$.
For a given $\alpha$, let $H_\alpha$ be a non-crossing alternating
path with respect to a balanced bipartition induced by the $y$-axis of
$\Gamma_\alpha$, as computed by the algorithm.
There are $O(1)$ balanced bipartitions given by any halving line of
$S$. Recall that $H_\alpha$ does not depend continuously on
$\alpha$; it depends only on the discrete bipartition. However, the
coordinates of the points depend continuously on $\alpha$.
Assume that $H_\alpha = p_{\sigma(1)},p_{\sigma(2)},\ldots,p_{\sigma(n)}$,
where $\sigma$ is a permutation of $[n]$; here $\sigma$ depends
on the bipartition (hence also on $\alpha$).
Let $W_\alpha$ denote the {\em width} of $S$ in direction $\alpha$, that is,
the width of the smallest parallel strip of direction $\alpha$ that contains $S$.
By projecting on the $x$-axis of $\Gamma_\alpha$, we get
\begin{eqnarray} \label{eq:111}
L(H_\alpha) &\geq& |x_{\sigma(1)}| + 2|x_{\sigma(2)}| +
\ldots+ 2|x_{\sigma(n-1)}| + |x_{\sigma(n)}|
=2\sum_{i=1}^n |x_i| - |x_{\sigma(1)}| - |x_{\sigma(n)}| \nonumber\\
&=& \sum_{j=1}^{n-1} (|x_j| +|x_{j+1}|)+
|x_1|+ |x_n|  - |x_{\sigma(1)}| - |x_{\sigma(n)}|
\geq \sum_{j=1}^{n-1} (|x_{j}| +|x_{j+1}|) -W_\alpha \nonumber\\
&\geq&  \sum_{j=1}^{n-1} |p_{j} p_{j+1}| |\cos(\beta_{j j+1}-\alpha)| -W_\alpha
\end{eqnarray}
In the 2nd line of the above chain of inequalities, we use the fact
that $p_{\sigma(1)}$ and $p_{\sigma(n)}$  lie on opposite sides of $\ell$,
since $n$ is even, hence $ |x_{\sigma(1)}| +|x_{\sigma(n)}|
\leq |p_{\sigma(1)} p_{\sigma(n)}| \leq W_\alpha$,
In the 3rd line, we make use of the following inequality:
for any two points $p_i, p_j \in S$,
$|p_i p_j| |\cos (\beta_{ij}-\alpha) | \leq |x_i| + |x_j|$,
with equality if and only if the two points lie on opposite sides of
the $y$-axis of $\Gamma_\alpha$.
Recall: for even $n$, $H_2$ is the longest of the $O(k)$ Hamiltonian
non-crossing paths $H_{\alpha_i}$ over all $O(k)$ balanced
bipartitions of $S$. (A given angle $\alpha_i$ yields $O(1)$
balanced partitions, and corresponding alternating paths denoted here
$H_{\alpha_i}$.) We thus have for each $\alpha \in [0, \pi)$:
$$L(H_2) \geq \sum_{j=1}^{n-1} |p_{j} p_{j+1}|
|\cos(\beta_{j j+1}-\alpha)| -W_\alpha.$$
Note that
$$ \int_{0}^{\pi} |\cos(\beta_{j j+1}-\alpha)| \intd \alpha =
\int_{0}^{\pi} |\cos \alpha| \intd \alpha =2,$$
and according to Cauchy's surface area formula,
we have $\int_{0}^{\pi} W_\alpha \intd \alpha = P(S)$.
By integrating both sides of the previous inequality over the
$\alpha$-interval $[0,\pi]$, we obtain
$$ \pi L(H_2)
\geq 2 \sum_{j=1}^{n-1} |p_{j} p_{j+1}| -P(S)
= 2 L(H_\textrm{OPT}) -P(S), $$
\begin{equation} \label{E1}
L(H_2) \geq \frac{2}{\pi}L(H_\textrm{OPT})-\frac{P(S)}{\pi}.
\end{equation}
We now improve the old approximation ratio of $\frac{1}{\pi} \approx
0.3183$ to $\frac{2}{\pi+1} \approx 0.4829$, by
balancing the lengths of the two paths computed in {\sc Step 1} and
{\sc Step 2}. Set $c=\frac{\pi+1}{2}$.

\smallskip
{\em Case 1:} $L(H_\textrm{OPT}) \leq c P(S)$. By
considering the path computed in {\sc Step 1}, we get a ratio of
at least
$$ \frac{L(H_1)}{L(H_\textrm{OPT})} \geq
\frac{P(S)}{L(H_\textrm{OPT})} \geq \frac{P(S)}{cP(S)}= \frac{2}{\pi+1}. $$

\smallskip
{\em Case 2:} $L(H_\textrm{OPT}) \geq c P(S)$. By considering the path
computed in {\sc Step 2} (inequality \eqref{E1}), we get a ratio of at least
$$ \frac{L(H_2)}{L(H_\textrm{OPT})} \geq
\frac{\frac{2}{\pi} L(H_\textrm{OPT})-\frac{1}{\pi}P(S)}{L(H_\textrm{OPT})} \geq
\frac{2}{\pi} - \frac{1}{c\pi} = \frac{2}{\pi} \left(1- \frac{1}{\pi+1}\right) =
\frac{2}{\pi+1}. $$

\smallskip
Observe that if the point set satisfies the condition
$\frac{P(S)}{\pi} \leq \delta L(H_\textrm{OPT})$, then by \eqref{E1},
we have
$$
L(H) \geq \frac{2}{\pi} L(H_\textrm{OPT})- \delta L(H_\textrm{OPT}) =
\left(\frac{2}{\pi}-\delta \right) L(H_\textrm{OPT}).
$$
This concludes the proofs of parts (i) and (ii) of Theorem~\ref{T1}.

\smallskip
(iii) With the same approach as in~\cite{ARS95}, a Hamiltonian path of
length at least $(1-\eps)\frac{2}{\pi}L(H_\textrm{OPT})- \frac{P(S)}{\pi}$
can be found by considering only $b/\sqrt{\eps}$ angles
$\theta_i=\frac{i \pi \sqrt{\eps}}{b}$, for
$i=0,1,\ldots,\lfloor b/\sqrt{\eps} \rfloor$, where $b$ is a suitable
absolute constant. The resulting running time is $O(n \log{n}/\sqrt{\eps})$.
This concludes the proof of Theorem~\ref{T1}.

\section{The spanning tree} \label{sec:tree}

In this section we prove Theorem~\ref{T2}.
Let $S=\{p_1,\ldots,p_n\}$, where $p_i=(x_i,y_i)$.
Given a point $p\in S$, the {\em star centered at} $p$,
denoted $S_p$, is the spanning tree on $S$ whose edges join $p$ to all
the other points. Since $S$ is in general position, $S_p$ is
non-crossing for any $p \in S$.
An {\em extended star centered at} $p$ is a spanning tree of $S$
consisting of paths of length $1$ or $2$ (edges) connecting $p$ to all
the other points.
See Fig.~\ref{f2}. While the star centered at a point is unique, there
may be many extended stars centered at the same point, and some of
them may be self-crossing. In particular $S_p$ is also an extended
star.

\begin{figure} [htbp]
\centerline{\epsfxsize=3.3in \epsffile{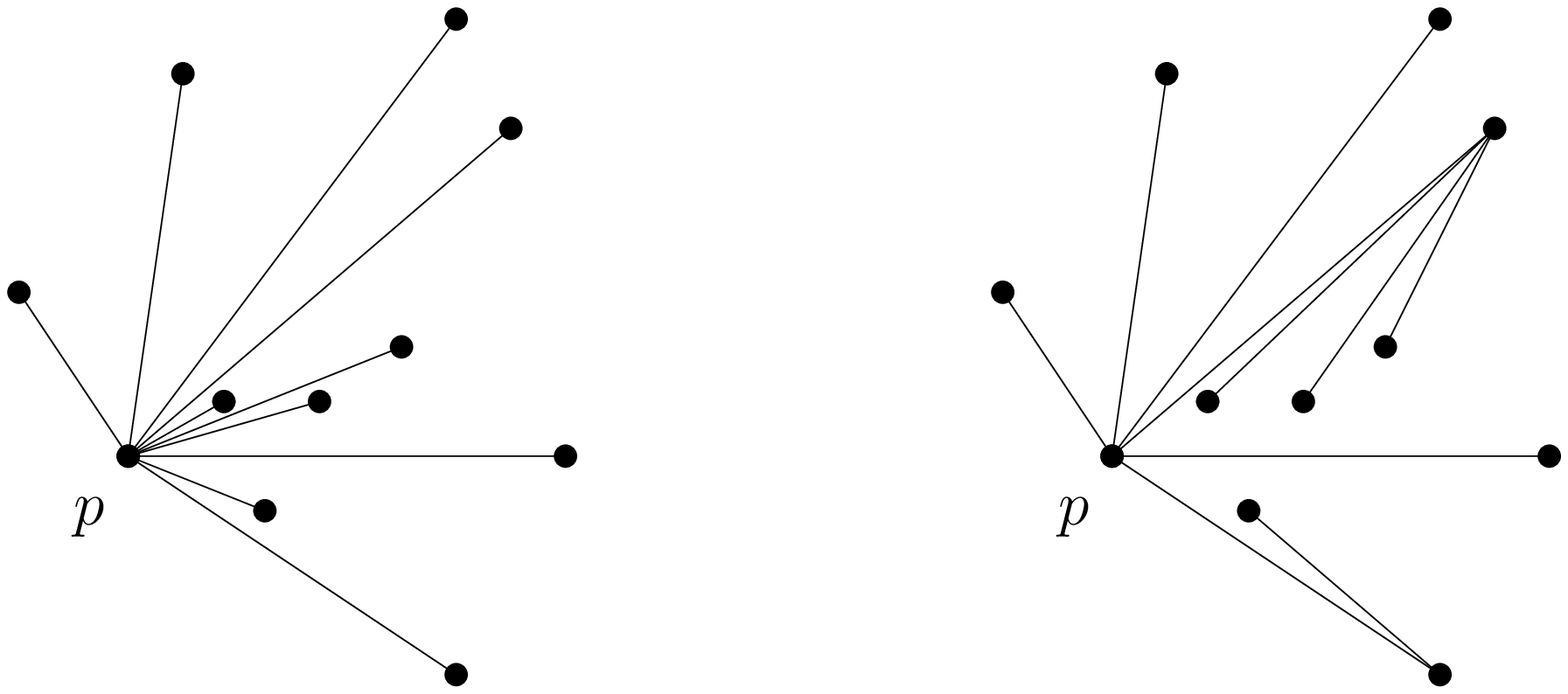}}
\caption{\small A star (left) and a non-crossing extended star (right)
on a same point set, both centered at the same point $p$.}
\label{f2}
\end{figure}

The algorithm of Alon et al. computes the $n$ stars centered at each
of the points, and then outputs the longest one. The algorithm takes
quadratic time, and the analysis shows a ratio of $\frac{n}{2n-2}$
(which tends to $1/2$ in the limit). Their algorithm works in any
metric space. As pointed out by Alon et al., the ratio $1/2$ is best
possible (in the limit) for this specific algorithm.
We first re-establish the $1/2$ approximation ratio using a faster
algorithm, and also with a simpler analysis. Our algorithm
works also in any  metric space; however in this general setting, the
running time remains quadratic.

\smallskip
\noindent Algorithm {\bf A2}: Compute a diameter of the point
set, and output the longest of the two stars centered at one of its endpoints.

\smallskip
Obviously the algorithm runs in $O(n \log{n})$ time, with bottleneck
being the diameter computation~\cite{PS85}.
Let $ab$ be a diameter pair, and assume w.l.o.g. that $|ab|=1$.
The ratio $1/2$ (or even $\frac{n}{2n-2}$)
follows from the next lemma in conjunction with the obvious upper
bound $L(T_\textrm{OPT}) \leq n$ (or $L(T_\textrm{OPT}) \leq n-1$).

\begin{lemma} \label{L3}
Let $S_a$ and  $S_b$ be the stars centered at the points $a$ and $b$,
respectively. Then $ L(S_a) + L(S_b) \geq n$.
\end{lemma}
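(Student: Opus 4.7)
The plan is straightforward: expand $L(S_a)+L(S_b)$ term by term and apply the triangle inequality on the pairs of edges incident to each non-diameter point.

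First I would write out the sums explicitly. Since $S_a$ consists of all edges from $a$ to the other $n-1$ points, and similarly for $S_b$,
\[
L(S_a)+L(S_b)=\sum_{p\in S\setminus\{a\}}|ap|\;+\;\sum_{p\in S\setminus\{b\}}|bp|.
\]
The edge $ab$ contributes to both sums, so I would separate it out and regroup the remaining terms by the non-center vertex $p\in S\setminus\{a,b\}$, obtaining
\[
L(S_a)+L(S_b)=2|ab|+\sum_{p\in S\setminus\{a,b\}}\bigl(|ap|+|bp|\bigr).
\]

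Next, the key inequality: for every $p\in S\setminus\{a,b\}$, the triangle inequality applied to the triangle $abp$ gives $|ap|+|bp|\geq |ab|=1$. Substituting into the previous identity,
\[
L(S_a)+L(S_b)\;\geq\;2\cdot 1+(n-2)\cdot 1\;=\;n,
\]
which is the claimed bound.

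There is no real obstacle here; the proof is essentially a one-line triangle-inequality computation once the edges are correctly bookkept. The only subtlety worth remarking on is that the argument uses only $|ab|=1$ (the normalization) and the triangle inequality, so it is valid in any metric space, matching the remark in the paper that Algorithm {\bf A2} extends beyond the Euclidean plane. Note also that the diameter condition $|ap|,|bp|\leq 1$ is not needed for the lower bound — it is only the opposite inequality $|ap|+|bp|\geq |ab|$ that matters — which is why the $1/2$ ratio requires combining this lemma with the upper bound $L(T_\textrm{OPT})\leq n-1$ rather than with anything about $a,b$ being a diameter pair.
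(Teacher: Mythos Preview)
Your proof is correct and essentially identical to the paper's own argument: both separate the doubly-counted edge $ab$, apply the triangle inequality $|ap|+|bp|\geq|ab|=1$ to each remaining point, and sum to obtain $2+(n-2)=n$. Your additional remarks about the metric-space generality and the irrelevance of the diameter upper bound are accurate and consistent with the paper's own observations.
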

\begin{proof}
Assume that $a=p_1$, $b=p_2$. For each $i=3,\ldots,n$, the triangle inequality
for the triple $a,b,p_i$ gives
$$ |ap_i| + |b p_i| \geq |ab|=1. $$
By summing up we have
\[ L(S_a) + L(S_b)= \sum_{i=3}^n (|ap_i| + |b p_i|) + 2|ab|
\geq (n-2)+2=n. \]

\vspace{-1.5\baselineskip}
\end{proof}

We now continue with the new algorithm that achieves a (provable)
$ \frac{1}{2} + \frac{1}{500}$ approximation ratio within the same
running time $O(n \log{n})$.

\smallskip
\noindent Algorithm {\bf A3}: Compute a diameter $ab$ of the point
set, and output the longest of the 5 non-crossing structures
$S_a$, $S_b$, $S_h$, $E_a$, $E_b$, described below.

\smallskip
Assume w.l.o.g. that the $ab$ is a horizontal unit segment, where
$a=(0,0)$ and $b=(1,0)$.
Let $h=(x_h,y_h)$ be a point in $S$ with a largest value of $|y|$.
By symmetry, we can assume that $y_h \geq 0$.
$S_a$, $S_b$, and $S_h$ are the 3 stars centered at $a$, $b$, and
$h$ respectively. $E_a$, resp. $E_b$, are two non-crossing
extended stars centered at $a$, resp, $b$; details to follow.
Each of the five structures can be computed in $O(n \log{n})$
time, so the total execution time is also $O(n \log{n})$.

\smallskip
Set $\delta=0.05$, $w=0.6$, $t=0.6$ and $z=0.48$, and refer to
Fig.~\ref{f1}.  Let $\ell_1$, $\ell_2$, $\ell_3$, and $\ell_4$,
be four parallel vertical lines:
$\ell_1: x=0$,  $\ell_2: x=0.2$, $\ell_3: x=0.8$, $\ell_4: x=1$.
Obviously, all points in $S$ lie in the strip bounded by $\ell_1$ and
$\ell_4$.
Let $V_m$ be the vertical parallel strip symmetric about the midpoint
of $ab$ and of width $w$. We refer to $V_m$ as the middle strip;
$V_m$ is bounded by the vertical lines $\ell_2$ and $\ell_3$.
Let $V_a$ and $V_b$ be the two vertical strips of width $0.2$
bounded by $\ell_1$ and $\ell_2$, and by $\ell_3$ and $\ell_4$
respectively. Let $c=(x_c,y_c)$ be the intersection point between
$\ell_3$ and the circular arc $\gamma_a$ of unit radius centered at
$a$ and sub-tending an angle of $60^\circ$. We have $x_c=0.8$ and
$$ y_c = \sqrt{1-0.8^2} = 0.6= t. $$

\begin{figure} [htb]
\centerline{\epsfxsize=2.7in \epsffile{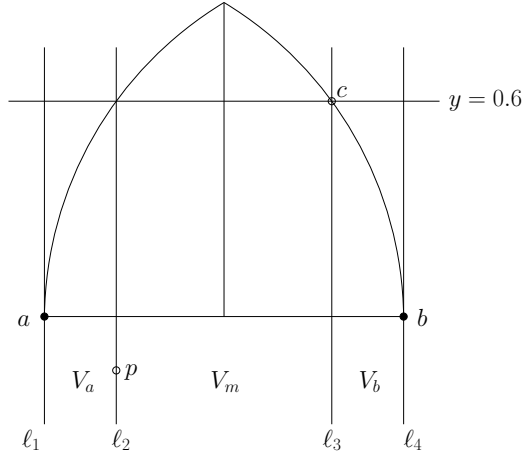}}
\caption{\small A diameter pair $a,b$ at unit distance, and the three
vertical strips $V_a$, $V_m$, and $V_b$. The two circular arcs
$\gamma_a$  and $\gamma_b$ of unit radius centered at $a$ and $b$
intersect at the point $(1/2, \sqrt{3}/2)$. All points of $S$ above
$ab$ lie in the region bounded by $ab$, $\gamma_a$  and $\gamma_b$.}
\label{f1}
\end{figure}
We now describe the two extended star structures $E_a$ and $E_b$.
See also Fig.~\ref{f3} for an example.
To construct $E_a$, first compute the order of visibility of the
points in $V_b$ from point $a$ by sorting. Then connect $a$ with each
point in the right strip $V_b$. Note that $b \in V_b$, thus $V_b \neq
\emptyset$. Call $S'_a$ the resulting star. The edges of this star together
with the vertical line $\ell_3$ divide $V_a \cup V_m$
into convex regions (wedges with a common apex $a$) ordered top-down.
The subset of points in each wedge can be computed using binary search
in overall $O(n \log{n})$ time (over all wedges).
$S'_a$ is extended (augmented) as follows.
In each wedge, say $paq$, all points are connected either to
$a$ or to $p$, depending on the best (longest) overall connection
cost. We denote the resulting {\em extended star} structure by $E_a$.
The construction of $E_b$ is analogous. It is clear by construction
that both $E_a$ and $E_b$ are non-crossing.

\begin{lemma} \label{L4}
For each $p \in S$, let $d_\textrm{max}(p)$ denote the maximum distance
from $p$ to other points in $S$. Then
$$ L(T_\textrm{OPT}) \leq \left[\sum_{i=1}^n d_\textrm{max}(p_i)\right] -1. $$
\end{lemma}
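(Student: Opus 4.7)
The plan is to prove the bound by a simple rooting-and-charging argument on $T_\textrm{OPT}$. Recall that just before the statement of the lemma we have normalized so that the diameter pair $ab$ satisfies $|ab|=1$, so the constant $-1$ in the bound is really $-D(S)$.

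First I would root the spanning tree $T_\textrm{OPT}$ at the vertex $a$ (one endpoint of the diameter pair). This gives each non-root vertex $v\in S\setminus\{a\}$ a unique parent edge $e_v$ in $T_\textrm{OPT}$, and since a tree on $n$ vertices has $n-1$ edges, the map $v\mapsto e_v$ is a bijection between $S\setminus\{a\}$ and $E(T_\textrm{OPT})$. Therefore
\[
L(T_\textrm{OPT})=\sum_{v\in S\setminus\{a\}}|e_v|\le \sum_{v\in S\setminus\{a\}} d_\textrm{max}(v),
\]
because each edge incident to $v$ has length at most $d_\textrm{max}(v)$ by definition.

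Next I would rewrite the right-hand side as $\sum_{i=1}^n d_\textrm{max}(p_i)-d_\textrm{max}(a)$ and use the key observation that $a$ is an endpoint of a diameter pair: by definition $d_\textrm{max}(a)\ge |ab|=1$, while trivially $d_\textrm{max}(a)\le D(S)=1$, so $d_\textrm{max}(a)=1$. Substituting yields
\[
L(T_\textrm{OPT})\le \sum_{i=1}^n d_\textrm{max}(p_i)-1,
\]
which is the desired inequality.

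There is essentially no obstacle here: the only subtle point is to make sure to root at a vertex whose $d_\textrm{max}$ equals the diameter, so that the charging ``saves'' exactly $1$ from the naive bound $\sum_i d_\textrm{max}(p_i)$. An endpoint of a diameter pair is the natural choice, and any vertex $p\in S$ with $d_\textrm{max}(p)=D(S)$ would work equally well.
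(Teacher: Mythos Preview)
Your proposal is correct and follows essentially the same approach as the paper: root $T_\textrm{OPT}$ at the diameter endpoint $a$, charge each edge to its child vertex via $|e_v|\le d_\textrm{max}(v)$, and use $d_\textrm{max}(a)=|ab|=1$ to obtain the $-1$ term.
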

\begin{proof}
Consider $T_\textrm{OPT}$ rooted at $a$ and drawn as an abstract tree
with the root at the top in the usual manner. Let $\pi(v)$ denote the
parent of a (non-root) vertex $v$. Uniquely assign each
edge $\pi(v) v$ of $T_\textrm{OPT}$ to vertex $v$. Obviously,
$L(\pi(v) v) \leq d_\textrm{max}(v)$ holds for each edge in the tree.
By adding up the above inequalities, and taking into account that
$d_\textrm{max}(a)=|ab|=1$, the lemma follows.
\end{proof}

\begin{lemma} \label{L5}
Assume that $\sum_{i=1}^n |y_i| \geq \delta n$ for some positive constant
$\delta \leq 1$. Then
$$ L(S_a) + L(S_b) \geq 2n \sqrt{\frac{1}{4} + \delta^2}. $$
\end{lemma}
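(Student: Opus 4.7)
The plan is to reduce the lemma to two standard one-variable facts: a pointwise lower bound on $|ap_i|+|bp_i|$ that depends only on the ordinate $y_i$, and a Jensen-type inequality controlled by the hypothesis $\sum_i |y_i| \geq \delta n$.

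First I would fix a point $p_i=(x_i,y_i)$ and show the pointwise inequality
\[
|ap_i| + |bp_i| \;\geq\; 2\sqrt{\tfrac{1}{4} + y_i^2}.
\]
This is the ellipse/focus property: among all points at ordinate $y_i$, the sum of the distances to the two foci $a=(0,0)$ and $b=(1,0)$ is minimized at $x_i=\tfrac{1}{2}$, where it equals $2\sqrt{1/4+y_i^2}$. Concretely, one can either differentiate $\sqrt{x^2+y_i^2}+\sqrt{(1-x)^2+y_i^2}$ in $x$ and observe that the unique critical point is $x=1/2$ (a minimum), or invoke the Minkowski inequality applied to the vectors $(x_i,y_i)$ and $(1-x_i,y_i)$. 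Summing over all $i$ then gives
\[
L(S_a)+L(S_b) \;=\; \sum_{i=1}^n \bigl(|ap_i|+|bp_i|\bigr) \;\geq\; 2\sum_{i=1}^n \sqrt{\tfrac{1}{4}+y_i^2}.
\]

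Next I would apply Jensen's inequality to the function $f(t)=\sqrt{1/4+t^2}$, which is convex on $\mathbb{R}$ and monotone increasing on $[0,\infty)$. Convexity gives
\[
\frac{1}{n}\sum_{i=1}^n f(|y_i|) \;\geq\; f\!\left(\frac{1}{n}\sum_{i=1}^n |y_i|\right),
\]
and monotonicity together with the hypothesis $\frac{1}{n}\sum_i |y_i| \geq \delta$ yields $f\!\left(\tfrac{1}{n}\sum_i |y_i|\right) \geq f(\delta) = \sqrt{1/4+\delta^2}$. Combining the two inequalities above,
\[
L(S_a)+L(S_b) \;\geq\; 2n\sqrt{\tfrac{1}{4}+\delta^2},
\]
which is the claimed bound.

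The argument is essentially routine once the right pointwise bound is identified; no step is really a deep obstacle. The only point that requires care is the first inequality, since one must verify that $x=1/2$ is a global minimum of $\sqrt{x^2+y^2}+\sqrt{(1-x)^2+y^2}$ rather than a saddle (easy from convexity of each summand in $x$, or from the ellipse interpretation). After that, the remainder is a textbook application of Jensen's inequality combined with the monotonicity of $f$.
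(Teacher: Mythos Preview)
Your argument is correct. The pointwise bound $|ap_i|+|bp_i|\geq 2\sqrt{1/4+y_i^2}$ follows cleanly from Minkowski's inequality applied to $(x_i,y_i)$ and $(1-x_i,y_i)$, and the Jensen step is valid since $t\mapsto\sqrt{1/4+t^2}$ is convex with positive second derivative $\tfrac{1}{4}(1/4+t^2)^{-3/2}$ and increasing on $[0,\infty)$. One small bookkeeping point you left implicit: the identity $L(S_a)+L(S_b)=\sum_{i=1}^n(|ap_i|+|bp_i|)$ holds because the ``missing'' self-distances $|aa|$ and $|bb|$ are zero, so summing over all $n$ indices is harmless.

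The paper omits its own proof of this lemma due to space constraints, so there is nothing to compare against; your two-step decomposition (pointwise ellipse bound, then Jensen) is the natural route and almost certainly what the authors intended.
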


\begin{lemma} \label{L6}
Let $n_a$ and $n_b$ denote the number of points in the left and right
vertical strips $V_a$ and $V_b$. Then $L(E_a) \geq \frac{1+w}{4} (n+n_b)$,
and similarly $L(E_b) \geq \frac{1+w}{4} (n+n_a)$. Consequently
$ L(E_a) + L(E_b) \geq \frac{1+w}{4} (2n+n_a+n_b)$.
$E_a$ and $E_b$ can be constructed in $O(n \log{n})$ time.
\end{lemma}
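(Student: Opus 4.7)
The plan is to lower-bound $L(E_a)$ by splitting its edges into the $n_b$ spokes from $a$ to each point of $V_b$ and the connections chosen inside each wedge, and to estimate each contribution through a single triangle-inequality argument. The key geometric observation is that the left boundary of $V_b$ is the vertical line $x=(1+w)/2=0.8$, so every $p\in V_b$ satisfies $|ap|\geq (1+w)/2$. The spoke contribution is therefore at least $n_b\cdot(1+w)/2$.

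For a generic wedge $W=paq$ with $p,q\in V_b$ consecutive in the angular order around $a$, let $T$ be the set of its interior points. The algorithm attaches every $t\in T$ either to $a$ or to $p$, choosing whichever option produces the larger total length. The triangle inequality $|at|+|pt|\geq |ap|\geq (1+w)/2$ yields
\[
c(W)\;\geq\;\tfrac{1}{2}\sum_{t\in T}\bigl(|at|+|pt|\bigr)\;\geq\;|T|\cdot\tfrac{1+w}{4}.
\]
The two extremal wedges, bounded by only one ray, are handled identically with the unique available reference point in $V_b$ (again at distance $\geq (1+w)/2$ from $a$). Since the wedges partition the points of $S\setminus(V_b\cup\{a\})$, summing $|T|$ gives essentially $n-n_b$ points, and combining with the spoke bound yields
\[
L(E_a)\;\geq\;n_b\cdot\tfrac{1+w}{2}+(n-n_b)\tfrac{1+w}{4}\;=\;\tfrac{1+w}{4}(n+n_b),
\]
up to an $O(1)$ additive slack I absorb into the statement. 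The bound on $L(E_b)$ follows by reflecting through the perpendicular bisector of $ab$, and adding the two yields the final consequence.

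For non-crossing, any chosen edge $tp$ in wedge $paq$ stays inside the angular cone at $a$ between rays $ap$ and $aq$, so it cannot cross a spoke $ap'$ for $p'\in V_b\setminus\{p,q\}$ nor any edge in a different wedge; edges sharing an endpoint $a$ or $p$ are interior-disjoint. For the running time: sort $V_b$ angularly around $a$ in $O(n\log n)$; assign each remaining point in $V_a\cup V_m$ to its wedge by binary search in $O(\log n)$ per point; and compute the two candidate sums per wedge in overall linear time, for a total of $O(n\log n)$. The only real obstacle is clean bookkeeping for the two one-sided extremal wedges, but the triangle-inequality estimate transfers verbatim, so no separate case analysis is needed.
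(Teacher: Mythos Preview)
Your argument is correct and matches the paper's own proof essentially line for line: the paper also bounds each spoke $ap$ with $p\in V_b$ by $(1+w)/2$, invokes the triangle-inequality averaging of Lemma~\ref{L3} inside each wedge to get $\frac{1+w}{4}$ per point, and sums to $\frac{1+w}{2}\,n_b + \frac{1+w}{4}\,(n-n_b)=\frac{1+w}{4}(n+n_b)$. Your write-up is in fact more careful than the paper's, which glosses over the extremal wedges and the $O(1)$ discrepancy from not counting the center $a$ among the wedge points; that additive constant is indeed harmless for the application in Case~3.
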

\begin{proof}
The distance between $\ell_1$ and $\ell_3$ is $\frac{1+w}{2}$.
By an argument similar to that in the proof of Lemma~\ref{L3},
the connection cost for a wedge with $m$ points is
at least $\frac{1+w}{4}m$. Therefore the total length of $E_a$ is
$$ L(E_a) \geq \frac{1+w}{2}n_b + \frac{1+w}{4}(n-n_b)=
\frac{1+w}{4} (n+n_b). $$
The estimation of $L(E_b)$ is analogous. The running time has been
established previously.
\end{proof}

\begin{lemma} \label{L7}
Assume that $\sum_{i=1}^n |y_i| \leq \delta n$ and $y_h \geq t$.
Then $L(S_h) \geq (t-\delta)n$.
\end{lemma}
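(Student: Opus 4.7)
The plan is to bound $L(S_h)$ from below by projecting the edges of the star onto the $y$-axis. Since $S_h$ is the star centered at $h$, we have $L(S_h) = \sum_{i=1}^{n} |h p_i|$, where the term for $p_i = h$ vanishes. For each $i$, the Euclidean distance dominates the absolute difference of $y$-coordinates, so $|h p_i| \geq |y_h - y_i|$.

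Next, I want a lower bound on $|y_h - y_i|$ that decouples $y_h$ from $|y_i|$, so that the hypotheses $y_h \geq t$ and $\sum_i |y_i| \leq \delta n$ can be plugged in separately. The elementary inequality $|y_h - y_i| \geq y_h - |y_i|$ holds in all cases: if $y_i \geq 0$ both sides agree when $y_h \geq y_i$ and the left side is larger when $y_h < y_i$; if $y_i < 0$ the left side equals $y_h + |y_i|$ which is at least $y_h - |y_i|$.

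Summing over $i = 1, \dots, n$ then gives
\[
L(S_h) \;\geq\; \sum_{i=1}^n (y_h - |y_i|) \;=\; n\, y_h - \sum_{i=1}^n |y_i| \;\geq\; n t - \delta n \;=\; (t-\delta)\, n,
\]
using $y_h \geq t$ for the first term and the hypothesis $\sum_i |y_i| \leq \delta n$ for the second.

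This argument is entirely mechanical, so there is no real obstacle; the only thing to double-check is the sign-case inequality $|y_h - y_i| \geq y_h - |y_i|$, which is why I isolated it as a separate step rather than folding it into the chain.
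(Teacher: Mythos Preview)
Your proof is correct and essentially identical to the paper's: both project the star edges onto the $y$-axis and then invoke the two hypotheses. The only cosmetic difference is that the paper first bounds $|hp_i|\geq y_h - y_i$ and then uses $\sum y_i \leq \sum |y_i|$, whereas you collapse these into the single step $|y_h - y_i|\geq y_h - |y_i|$; the resulting chain of inequalities is the same.
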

\begin{proof}
\[ L(S_h) \geq  \sum_{i=1}^n (y_h - y_i)= n y_h - \sum_{i=1}^n  y_i
\geq n y_h - \sum_{i=1}^n  |y_i| \geq n y_h - \delta n
\geq (t-\delta)n. \]

\vspace{-1.5\baselineskip}
\end{proof}

\begin{lemma} \label{L8}
Assume that $|y_h| \leq t=0.6$.
Let $p \in S$ be a point in the middle strip $V_m$, with $y$-coordinate
satisfying $|y| \leq 0.15$. Then $d_\textrm{max}(p) \leq 0.9605$.
\end{lemma}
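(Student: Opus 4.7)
The plan is to upper-bound $d_\textrm{max}(p)$ by the maximum distance from $p$ to any point in a convex region $K$ provably containing $S$. Because $ab$ is a diameter with $|ab|=1$, every $q \in S$ lies in the lens
\[ L := \{(x,y) : x^2+y^2 \le 1,\ (x-1)^2+y^2 \le 1\}. \]
The hypothesis $|y_h| \le 0.6$, combined with the choice of $h$ as a point of maximum $|y|$, further forces $|y_q| \le 0.6$ for every $q \in S$. Hence $S \subseteq K := L \cap \{|y| \le 0.6\}$. The set $K$ is convex; its boundary consists of a left arc of the circle centered at $b$ (lying at $x \le 0.2$), a right arc of the circle centered at $a$ (lying at $x \ge 0.8$), two horizontal segments at $y = \pm 0.6$, and four corners $(0.2, \pm 0.6)$ and $(0.8, \pm 0.6)$.

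Next I claim that $\max_{q \in K}|p - q|$ is attained at one of these four corners. Since $|p-q|^2$ is convex in $q$, its maximum over $K$ lies on the boundary of $K$. On each horizontal segment the restriction is a convex quadratic in $x$, so the max on that segment is at an endpoint (a corner). On the left arc, the unique maximizer of $|p-q|^2$ over the full circle centered at $b$ is the ``far point'' $b + (b-p)/|b-p|$, whose $x$-coordinate equals $1 + (1-x_p)/|p-b|$; since $x_p \le 0.8$, this quantity exceeds $1$, whereas the left arc lies in $x \le 0.2$, so the far point is not on the arc and the max on the arc is again at an endpoint. A symmetric argument handles the right arc.

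Finally, I maximize over $p$ in the rectangle $R := [0.2,0.8] \times [-0.15, 0.15]$. For each fixed corner $q$ of $K$, $|p - q|^2$ is convex in $p$, so the max over $R$ is at a corner of $R$. Exploiting the symmetries $y \mapsto -y$ and $x \mapsto 1 - x$ of both $K$ and $R$, it suffices to examine $p = (0.2, 0.15)$, whose farthest corner of $K$ is $(0.8, -0.6)$ at squared distance $0.6^2 + 0.75^2 = 0.9225$; since $\sqrt{0.9225} < 0.9605$, the lemma follows. The main obstacle is the arc analysis: one must rule out an interior maximum on each circular arc, and the argument above does so using precisely the hypothesis $x_p \in [0.2, 0.8]$ that places $p$ in the middle strip $V_m$.
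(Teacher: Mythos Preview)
Your proof is correct and follows the same approach as the paper: bound $S$ by the convex region cut out by the two unit circles and the strip $|y|\le 0.6$, argue that the maximum of $|p-q|$ is attained at corner points, and compute $\sqrt{0.6^2+0.75^2}\le 0.9605$. The paper merely asserts that the extremal configuration is $p=(0.2,-0.15)$, $q=c=(0.8,0.6)$ as ``straightforward to check,'' whereas you supply the convexity/endpoint justification explicitly.
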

\begin{proof}
It is straightforward to check that the maximum distance is attained
for a point $p$ on $\ell_2$ with $y$-coordinate $-0.15$.
The furthest point from $p$ in the allowed region is $c$. Hence
\[ d_\textrm{max}(p) \leq |pc| = \sqrt{w^2 + (0.15+ t)^2} =
\sqrt{0.6^2 + 0.75^2} \leq 0.9605. \]

\vspace{-1.5\baselineskip}
\end{proof}

We now distinguish the following four cases to complete our
estimation of the approximation ratio.

\smallskip
{\em Case 1:} $\sum_{i=1}^n |y_i| \geq \delta n$. The algorithm
outputs\footnote{Here and in other instances it is meant that the
algorithm outputs a structure at least as long as these.}
$S_a$ or $S_b$. By Lemma~\ref{L5}, the approximation ratio is
at least

$$ \frac{L(S_a)+L(S_b)}{2 L(T_\textrm{OPT})} \geq \sqrt{\frac{1}{4} + \delta^2}
\geq 0.502. $$

\smallskip
{\em Case 2:} $\sum_{i=1}^n |y_i| \leq \delta n$ and
$y_h \geq t$. The algorithm outputs $S_h$.
By Lemma~\ref{L7}, the approximation ratio is
at least $t-\delta =0.55$.

\smallskip
{\em Case 3:} $\sum_{i=1}^n |y_i| \leq \delta n$ and
$y_h \leq t$ and $n_a+n_b \geq (1-z)n$.
The algorithm outputs $E_a$ or $E_b$.
We only need the last inequality in estimating the length.
By Lemma~\ref{L6}, the approximation ratio is at least

$$ \frac{L(E_a)+L(E_b)}{2 L(T_\textrm{OPT})} \geq
\frac{1+w}{4}  \cdot \frac{2n+n_a+n_b}{2n} \geq \frac{(1+w)(3-z)}{8}
= \frac{1.6 \cdot 2.52}{8}= 0.504. $$

\smallskip
{\em Case 4:} $\sum_{i=1}^n |y_i| \leq \delta n$ and
$y_h \leq t$ and $n_a+n_b \leq (1-z)n$.
The algorithm outputs $S_a$ or $S_b$.
There are at least $z n =0.48 n$ points in the middle strip $V_m$.
Observe that at most $n/3$ points in $V_m$ have
$|y_i| \geq 0.15$; otherwise we would have
$$ \sum_{i=1}^n |y_i| \geq \sum_{V_m} |y_i| > 0.15 \cdot \frac{n}{3}
= 0.05 n =\delta n, $$
a contradiction. It follows that at least $12n/25-n/3=11n/75$
points in the middle strip have $|y_i| \leq 0.15$. By Lemma~\ref{L4}
and Lemma~\ref{L8},
$$ L(T_\textrm{OPT}) \leq \frac{64n}{75} + 0.9605 \cdot \frac{11n}{75}
\leq 0.9943 n. $$
The approximation ratio is at least
$$ \frac{L(S_a)+L(S_b)}{2 L(T_\textrm{OPT})} \geq \frac{n}{2 \cdot 0.9943 n}
\geq 0.502. $$
This completes the list of cases and thereby the proof of Theorem~\ref{T2}.

\smallskip
\noindent {\em Remark.}
The example in Fig.~\ref{f3} with $n$ points ($n$ even)
equally spaced along a circle shows that the constant $0.502$
measuring the approximation ratio achieved by our algorithm {\bf A3}
cannot be improved to anything larger than $2/\pi$.
Indeed the lengths of the five structures computed by the algorithm
are $L(S_a)=L(S_b)=L(S_h)=L(E_a)=L(E_b) =(1-o(1)) \frac{2}{\pi} n$,
while $L(T_\textrm{OPT}) \geq L(H_\textrm{OPT}) = (1-o(1))n$.
\begin{figure} [htb]
\centerline{\epsfxsize=2.1in \epsffile{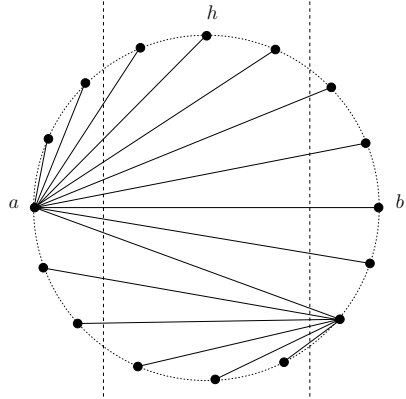}}
\caption{\small The non-crossing structure $E_a$ for an example with
$n=16$ points on the circle. The middle strip $V_m$ is bounded by the
two dashed vertical lines.}
\label{f3}
\end{figure}

\section{The Hamiltonian cycle} \label{sec:cycle}

In this section we present the proof of Theorem~\ref{T3}, which is similar
(including notation) to that of Theorem~\ref{T1}.
The rotated coordinate system $\Gamma_\alpha$, and the $x$-coordinates
$x_i$ with respect to this system are denoted in the same way.
By relabeling the points assume that the optimal cycle is
$Q_\textrm{OPT} = p_1,p_2,\ldots,p_n$ (with the convention that $p_{n+1}=p_1$).
We approximate $Q_{\rm OPT}$ by constructing a non-crossing alternating path $A$
on a subset of $S$, and then completing it to a non-crossing cycle using
convex hull vertices. We need to observe that the alternating path $A$ on the
subset $I$ of interior (non-hull) vertices of $S$ produced by the algorithm of
Abellanas et al.~\cite{AGH+99} is {\em not} good enough for {\em this} strategy:
even though one endpoint of $A$ (the first computed by the algorithm) is
always on the convex hull of $I$, the other endpoint might be blocked by edges
of $A$, so that $A$ might not be extendible to a non-crossing Hamiltonian cycle
(an example is shown in Fig.~\ref{abell}). Here, we give a stronger result
that fits our purpose (for an even number of points).
\begin{figure} [htb]
\centerline{\epsfxsize=2in \epsffile{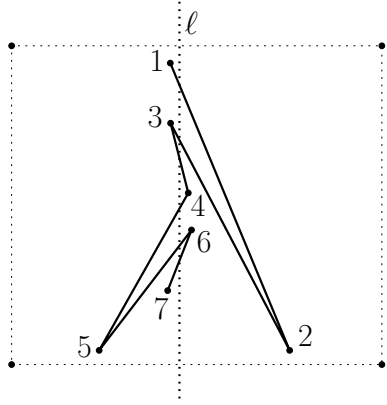}}
\caption{\small A non-crossing alternating path obtained by the algorithm of
  Abellanas et al. For the purpose of cycle construction, the path
is non-extendible from its 2nd endpoint, vertex~$7$.}
\label{abell}
\end{figure}

\begin{lemma}\label{L9}
Let $S=R \cup B$ with with $|R|=|B|$, be a linearly separable bipartition
given by line $\ell$. Then $S$ admits an alternating non-crossing
spanning path $A$ such that
(1) the edges of $A$ cross $\ell$ at points ordered monotonically
along $\ell$; and (2) the two endpoints of $A$ are incident to the two
distinct edges of the convex hull that connect $R$ and $B$ (the two
red-blue bridges). Such a Hamiltonian path can be computed in $O(n \log{n})$ time.
We refer to the underlying procedure as the {\em two-endpoint path
construction algorithm}.
\end{lemma}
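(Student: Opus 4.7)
The plan is to run a two-sided variant of the Abellanas et al.\ algorithm, growing the path from both endpoints in parallel. Let $r_1b_1$ and $r_2b_2$ be the top and bottom red-blue bridges of $\convh(S)$ (distinct for $n\ge 4$). Initialize $u_0:=r_1$ and $v_0:=b_2$: these will be the two endpoints of the final path, they lie on the two required bridges, and they have opposite colors --- as forced, since any alternating Hamiltonian path on an even number of points must have oppositely-colored endpoints. Set $T:=S\setminus\{u_0,v_0\}$, so that $|R\cap T|=|B\cap T|=n/2-1$.

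The main loop alternates between a \emph{top extension}, which computes the current top red-blue bridge of $T$ and appends its endpoint of color opposite to $u_k$ to $A_{top}=u_0,\ldots,u_k$ (adding the edge $u_ku_{k+1}$ and removing $u_{k+1}$ from $T$), and a symmetric \emph{bottom extension} on $A_{bot}=v_0,\ldots,v_j$ using the bottom bridge of $T$. After $n-2$ extensions $T$ is empty, and I close the path by adding the single middle edge $u_{k_f}v_{j_f}$.

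Correctness rests on the inductive invariant that, at every stage, $A_{top}$ lies in the closed half-plane determined by the line through the top bridge of $T$ that is disjoint from the interior of $\convh(T)$, while $A_{bot}$ lies analogously on the opposite side of the bottom bridge. Three things follow at once: (a)~edges within $A_{top}$ (resp.\ $A_{bot}$) are pairwise non-crossing, by the original Abellanas argument --- the new vertex is on the relevant hull of $T$ and the previous endpoint sat outside the old hull, so the new edge stays outside the shrunk hull; (b)~edges of $A_{top}$ do not cross edges of $A_{bot}$, as the two sub-paths inhabit opposite sides of $\convh(T)$; (c)~the edges of the final path cross $\ell$ monotonically, because successive top bridges of the shrinking $T$ slide monotonically along $\ell$ (and dually for bottom bridges), with the middle-edge crossing falling in between. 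When $|T|=2$, the two remaining points (one red, one blue) are joined by a single segment that is simultaneously the top and the bottom bridge of $T$; after the two final extensions this segment becomes precisely the middle edge $u_{k_f}v_{j_f}$, lying on the shared boundary of the two half-planes and crossing nothing. A parity count with $k_f=j_f=(n-2)/2$ confirms that $u_{k_f}$ and $v_{j_f}$ have opposite colors.

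I expect the main obstacle to be verifying that the two sub-paths do not invade each other's region under concurrent growth. This reduces to the geometric fact that deleting a vertex of the upper hull of $T$ leaves the lower hull unchanged (and vice versa), so the separating role of $\convh(T)$ between $A_{top}$ and $A_{bot}$ is preserved throughout the alternation. For the running time, I maintain dynamic upper and lower hulls of $R\cap T$ and $B\cap T$ under point deletions, extract the top and bottom red-blue bridges in $O(\log n)$ per iteration, and the total is $O(n\log n)$, matching Abellanas et al.
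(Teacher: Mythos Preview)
Your approach is essentially the same as the paper's: grow the alternating path from both bridge endpoints simultaneously (you alternate top and bottom extensions, the paper adds one vertex from each bridge in a single step), and use $\convh(T)$ to separate the two sub-paths until they meet in the middle. The paper's correctness argument is terser---it just observes that the two new edges at each step are separated from all previous edges by $\convh(S\setminus A)$, and that the evenness of $|S|$ prevents the top and bottom extensions from ever claiming the same vertex---whereas you make the half-plane invariant explicit; but the underlying geometry is identical, and your $O(n\log n)$ bound via dynamic hulls matches theirs.
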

\begin{proof}
We modify the algorithm of Abellanas et al. for path construction, so
that the path is grown from the two endpoints and the two sub-paths merge ''in
the middle''.
Recall that  $S=R \cup B$, and $|R|=|B|$, thus $|S|$ is even.
Let $r_1 b_1$ and $r_2 b_2$ be the top and bottom red-blue edges of
the convex hull $\convh(S)$, respectively, called {\em top} and {\em bottom}
bridges; it is possible that $r_1=r_2$ or  $b_1=b_2$ but not both.
One endpoint of $A$ is an endpoint of the top bridge, and the other
endpoint of $A$ is an endpoint of the bottom bridge, and they are
chosen of opposite colors.
% Since the convex hull of $S$ may be a triangle (and the top and bottom
% bridge might be adjacent), we assume that $|R|=|B|$.
Let $A=\{r_1,b_2\}$ or $A=\{b_1,r_2\}$ arbitrarily,
containing two endpoints of the path. At every step, recompute
the top and bottom bridges of $S \setminus A$, and append either
the red or the blue vertex of each bridge to $A$ such that the appended
edges cross the separating line $\ell$. In the last step, the convex hull of
$S\setminus A$ is a red-blue segment that merges the two sub-paths.
The two new edges added simultaneously at each step
cannot cross each other; and they cannot cross previous edges, since they are
separated from them by the convex hull of $S\setminus A$. Finally, they cannot
extend the two sub-paths by the same point either, because $|S|$ is even.
\end{proof}

The next lemma follows from~\cite[Lemma~2.1]{HKRT08}; we will only need its
corollary, Lemma~\ref{L11}.

\begin{lemma} {\rm(\cite{HKRT08}).} \label{L10}
Let $P=p_1, p_2, \ldots, p_n$ be a simple polygon (with the convention
that $p_{n+1}=p_1$) and $q$ be a point in the exterior of the convex
hull of $P$, where $P \cup \{q\}$  is in general position. Then $q$
sees one edge $p_i p_{i+1}$ of $P$. Such an edge can be found in $O(n)$ time.
\end{lemma}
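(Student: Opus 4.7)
The plan is to use a pocket decomposition of $P$ with respect to $\convh(P)$, combined with induction on the number of vertices. Since $q$ lies outside $\convh(P)$, the convex hull of $P\cup\{q\}$ has $q$ as a vertex adjacent to two tangent vertices $p_a,p_b$ of $\convh(P)$; the arc of $\partial\convh(P)$ from $p_a$ to $p_b$ on the side facing $q$ (the \emph{near arc}) consists of consecutive hull edges.

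The favorable case is immediate: if some hull edge $uv$ on the near arc is also a polygon edge of $P$, then the closed triangle $quv$ lies entirely outside $\convh(P)$ except for the base $uv$ itself, since $q$ is on the exterior side of the line through $uv$. Because every polygon edge of $P$ is contained in $\convh(P)$, no other polygon edge can meet the interior of $\triangle quv$, and hence $q$ sees $uv$.

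In the remaining case, every hull edge on the near arc is a \emph{pocket lid}. I would pick such a lid $uv$ and consider its pocket: a simple sub-polygon bounded by $uv$ and a chain of polygon edges $u=w_0,w_1,\ldots,w_m=v$ of $P$ lying inside $\convh(P)$. Then I would sweep the angular interval $[\theta(u),\theta(v)]$ of directions from $q$ through the lid and track, for each $\theta$, which chain edge is hit first. The key claim is that at least one chain edge $e=w_jw_{j+1}$ is first-hit throughout the entire sub-range $[\theta(w_j),\theta(w_{j+1})]$ spanned by its endpoints. For such an $e$, the triangle $qw_jw_{j+1}$ contains no other chain edge of the pocket (an intruder would be first-hit somewhere in that sub-range, contradicting the claim), and its intersection with $\convh(P)$ lies inside the pocket interior, so no polygon edge outside the pocket can enter it either; hence $q$ sees $e$. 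To establish the claim, I would examine the first chain edge $e_1=uw_1$: if its first-hit range extends all the way to $\theta(w_1)$, return $e_1$; otherwise a reflex jump occurs at some deeper vertex $w_j$, and I recurse on the strictly shorter sub-chain from $w_j$ to $v$, which forms a smaller pocket with the diagonal $w_jv$ as its new lid.

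The $O(n)$ running time follows by performing the entire decomposition and sweep in a single walk along $\partial P$ using a stack of candidate visible edges, in the style of Lee's or Joe--Simpson's linear-time visibility algorithms; each vertex is pushed and popped at most once. The main obstacle will be making the finite-descent step rigorous: one must verify that the reflex jump always produces a strictly shorter chain while preserving the inductive hypothesis, namely that $q$ lies outside the convex hull of the remaining sub-pocket. A clean formalization follows the inductive argument of \cite{HKRT08} directly.
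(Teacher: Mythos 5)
The paper does not prove Lemma~\ref{L10} at all: it is quoted verbatim from~\cite[Lemma~2.1]{HKRT08}, and the text immediately moves on to its corollary, Lemma~\ref{L11}. So there is no in-paper proof to compare against, and your proposal must be judged on its own.

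Your outline (near arc of $\convh(P)$, Case~1 when a near-arc hull edge is an edge of $P$, Case~2 via pockets) is sensible, and Case~1 is correct as written. The gap is exactly where you flag it, but it is more serious than a formalization chore: the ``recurse on the sub-chain from $w_j$ to $v$ with new lid $w_jv$'' step does not in general produce a pocket. The segment $w_jv$ need not be a diagonal of the pocket polygon --- it can cross the chain $w_j,\ldots,v$ and even the chain $u,\ldots,w_j$ --- so the ``smaller pocket'' is not a well-defined simple polygon and the inductive hypothesis is not available. There is a second, subtler issue your sweep glosses over: the angles $\theta(w_0),\ldots,\theta(w_m)$ as seen from $q$ are not monotone along the chain, so the first-hit range of an edge and the angular interval ``spanned by its endpoints'' need not nest the way the descent tacitly assumes; after the jump at $\theta^\ast$ one must actually prove that the prefix chain $w_1,\ldots,w_{j-1}$ can never re-occlude a point of the suffix chain at angles in $(\theta^\ast,\theta(v))$, which does not follow merely from $w_j$ having been grazed. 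You observe correctly that $q$ stays outside $\convh$ of any subset of $S$, but that alone is not enough to keep the recursion well-founded. To close the argument you would need either to replace $w_jv$ with the actual window segment (from $w_j$ to the point of $e_1$ hit at $\theta^\ast$) and show the resulting region is simple and that the invariant transfers, or to abandon the recursion in favor of a direct global argument about the visibility polygon of $q$. Your appeal to a ``Lee / Joe--Simpson-style'' linear scan for the running time is fine once existence is established, but note that those algorithms are for visibility from a point \emph{inside} a polygon; for an exterior point you would want the near-arc tangency computation plus one pass per pocket, which is still $O(n)$ overall.
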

\begin{lemma} \label{L11}
Let $P=p_1, p_2, \ldots, p_n$ be a simple polygon (with the convention
that $p_{n+1}=p_1$) and $q$ be a point
in the exterior of the convex hull of $P$, where $P \cup \{q\}$
is in general position. Then the polygonal cycle
$P$ can be extended to include $q$ so that $P \cup \{q\}$ is
still a simple polygon. More precisely, there exists $i \in [n]$, so that
$Q= p_1, \ldots, p_i, q, p_{i+1}, \ldots, p_n$ is a simple polygon.
Moreover, $L(Q)>L(P)$. The extension can be computed in $O(n)$ time.
\end{lemma}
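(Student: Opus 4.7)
The plan is to reduce Lemma~\ref{L11} directly to Lemma~\ref{L10} together with the strict triangle inequality. First I would invoke Lemma~\ref{L10} on the polygon $P$ and the external point $q$, obtaining in $O(n)$ time an index $i \in [n]$ such that $q$ sees the edge $p_i p_{i+1}$. By the definition of visibility given just before Lemma~\ref{L2}, the closed triangle $T$ with vertices $q$, $p_i$, $p_{i+1}$ intersects no edge of $P$ except $p_i p_{i+1}$ itself.

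Next I would define $Q$ by splicing $q$ between $p_i$ and $p_{i+1}$, so that $Q$ differs from $P$ only in that the single edge $p_i p_{i+1}$ is replaced by the two edges $p_i q$ and $q p_{i+1}$. Simplicity of $Q$ is then immediate: the two new edges lie on the boundary of the triangle $T$, and by the visibility condition $T$ meets no other edge of $P$ except at the shared vertices $p_i$ and $p_{i+1}$; the two new edges themselves share only the point $q$. All remaining edges of $P$ are untouched, so no pair of edges of $Q$ has intersecting interiors.

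For the length increase, note that the triangle $p_i q p_{i+1}$ is non-degenerate because $P \cup \{q\}$ is in general position (no three collinear). The triangle inequality is therefore strict, giving $|p_i q| + |q p_{i+1}| > |p_i p_{i+1}|$, whence
\[
L(Q) \;=\; L(P) - |p_i p_{i+1}| + |p_i q| + |q p_{i+1}| \;>\; L(P).
\]
The running time is dominated by the single call to Lemma~\ref{L10}, and the splicing itself is $O(1)$, so the total cost is $O(n)$.

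I do not anticipate a genuine obstacle here; the lemma is essentially a packaging of Lemma~\ref{L10} with the triangle inequality. The only point that must be checked with care is the strictness of the length increase, which is where the general-position hypothesis is used to rule out a collinear triple $p_i, q, p_{i+1}$.
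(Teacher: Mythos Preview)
Your proposal is correct and follows essentially the same route as the paper: invoke Lemma~\ref{L10} to find a visible edge $p_i p_{i+1}$, splice $q$ in its place, and use the (strict) triangle inequality for the length increase, with the $O(n)$ bound inherited from Lemma~\ref{L10}. You spell out the simplicity argument and the role of general position in the strict inequality a bit more explicitly than the paper does, but the argument is the same.
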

\begin{proof}
By Lemma~\ref{L10}, $q$ sees one edge $p_i p_{i+1}$ of $P$.
Replacing this edge of $P$ by the two edges $p_i q $ and $q p_{i+1}$
results in a simple polygon $Q= p_1, \ldots, p_i, q, p_{i+1}, \ldots, p_n$.
By the triangle inequality, $L(Q)>L(P)$.
The extension can be computed in $O(n)$ time, as determined by the
time needed to find a visible edge.
\end{proof}

\smallskip
Note that the condition in the lemma that $q$ lies in the exterior of
the convex hull of $P$, is indeed necessary. Otherwise one cannot
guarantee that $q$ sees an edge of $P$.

\smallskip
{\rm (i)} Let $S=S' \cup S''$, where $S'$ is the set of convex hull
vertices and $S''$ is the set of interior points.
Let $S'=\{p_{j_1}, p_{j_2}, \ldots, p_{j_h}\}$.
Put $h =|S'|$, $m =|S''|$, thus $n=h+m$. Assume first for simplicity that
$m$ is even. An easy modification of the algorithm, explained below, is used
for $m$ odd.

\smallskip
\noindent Algorithm {\bf A4}: \\
{\sc Step 1.} For all non-equivalent
bisections of  $S''$ (i.e., for all balanced bipartitions of $S''$):
1.~Compute a non-crossing alternating path $A$ by using the two-endpoint path
construction algorithm (Lemma~\ref{L9}).
2.~Extend $A$ to a cycle by connecting its endpoints to (one or two)
convex hull vertices. 3.~Further extend this cycle to include the
remaining hull vertices, by repeated invocation of Lemma~\ref{L11}. \\
{\sc Step 2.} Output the longest such cycle (containing all points of $S$).

\smallskip
Observe that after {\sc Step 1.}1, the two endpoints of the path are vertices
of $\convh(S'')$, hence they can be connected to hull vertices to make a cycle.
If $m$ is odd, then there is a point $q \in S''$ on the line $\ell$.
Use the two-endpoint path construction algorithm for $S''\setminus\{q\}$, and
the same bisecting line $\ell$. If $q$ is in the interior of
$\convh(S''\setminus \{q\})$, then extend the path with point $q$, using
Lemma~\ref{L2}. Otherwise, $q$ sees the top or bottom bridge of
$\convh(S''\setminus \{q\})$, so the path can be extended by connecting $q$
to the endpoint visible to $q$. The two endpoints of the extended path are on
$\convh(S'')$, hence they can be connected to hull vertices to make a
cycle, as in the case of even $m$.


\begin{thebibliography}{10} % \itemsep -2pt

\bibitem {AGH+99} M. Abellanas, J. Garcia, G. Hern\'andez,
M. Noy, and P. Ramos:
Bipartite embeddings of trees in the plane,
{\em Discrete Applied Mathematics}, {\bf 93} (1999), 141--148.

\bibitem{ACF+08}
O.~Aichholzer, S.~Cabello, R.~Fabila-Monroy, D.~Flores-Pe\~naloza,
T.~Hackl, C.~Huemer, F.~Hurtado, and D.~R.~Wood:
Edge-removal and non-crossing configurations in geometric graphs,
{\em Proc. 24th European Workshop on Computational Geometry}, Nancy 2008,
pp.~119--122.

\bibitem {ARS95} N. Alon, S. Rajagopalan and S. Suri:
Long non-crossing configurations in the plane,
{\em Fundamenta Informaticae} {\bf 22} (1995), 385--394.
Also in {\em Proc. 9th ACM Sympos. on Comput. Geom.},
1993, 257--263.

%\bibitem {AGS00} T. Asano, S. Ghosh, and T. Shermer:
%Visibility in the plane,
%in {\em Handbook of Computational Geometry}
%(J.-R. Sack and J.~Urrutia, editors),
%Elsevier Science, Amsterdam, 2000, pp.~829--876.

%\bibitem {Ba96} A. Barvinok:
%Two algorithmic results for the traveling salesman problem,
%{\em Mathematics of Operations Research}, {\bf 21} (1996), 65--84.

\bibitem {BE97} M. Bern and D. Eppstein:
Approximation algorithms for geometric problems,
in {\em Approximation Algorithms for $NP$-hard Problems}
(D. S. Hochbaum, editor), PWS, Boston, 1997, pp.~296--345.

%\bibitem {BMP05} P.~Bra\ss , W.~Moser, and J.~Pach:
%{\em Research Problems in Discrete Geometry},
%Springer, New York, 2005.

\bibitem{CDJK07}
J. \v{C}ern\'y, Z.~Dvo\'r\'ak, V.~Jel\'{\i}nek, and J.~K\'ara:
Noncrossing {H}amiltonian paths in geometric graphs,
{\em Discrete Applied Mathematics}, {\bf 155} (2007), 1096--1105.

\bibitem {De98} T. K. Dey:
Improved bounds on planar $k$-sets and related problems,
{\em Discrete \& Computational Geometry}, {\bf 19} (1998), 373--382.

\bibitem {E87} H. Edelsbrunner:
{\em Algorithms in Combinatorial Geometry},
Springer-Verlag, Heidelberg, 1987.

%\bibitem {EW85} H. Edelsbrunner and E. Welzl:
%On the number of separations of a finite set in the plane,
%{\em Journal of Combinatorial Theory, Ser. A}, {\bf 38} (1985), 15--29.

\bibitem {Ep00} D. Eppstein:
Spanning trees and spanners,
in {\em Handbook of Computational Geometry}
(J.-R. Sack and J.~Urrutia, editors),
Elsevier Science, Amsterdam, 2000, pp.~425--461.

%\bibitem {EA81} H. ElGindy and D. Avis:
%A linear algorithm for computing the visibility polygon from a point,
%{\em Journal of Algorithms}, {\bf 2} (1981), 186--197.

%\bibitem{ELSS73}
%P.~Erd\H os,  L. Lov\'asz, A. Simmons, and E. Straus:
%Dissection graphs of planar point sets. In
%{\em A Survey of Combinatorial Theory} (J. N. Srivastava, editor),
%North-Holland, Amsterdam, pp.~139--154, 1973.

\bibitem {Fe99} S. P. Fekete:
Simplicity and hardness of the maximum traveling salesman problem
under geometric distances, 
{\em Proceedings of the 10th ACM-SIAM Symposium on Discrete Algorithms},
1999, pp.~337--345.

\bibitem{HKRT08}
F.~Hurtado, M.~Kano, D.~Rappaport, and Cs.~D.~T\'oth:
Encompassing colored planar straight line graphs,
{\em Computational Geometry: Theory and Applications}, {\bf 39} (1) (2008), 14--23.

%\bibitem {J90} B. Joe:
%On the correctness of a linear-time visibility polygon algorithm,
%{\em Internat. Journal Comput. Math.}, {\bf 32} (1990),
%155--172.

%\bibitem {JS87} B. Joe and R. B. Simpson:
%Correctness to Lee's visibility polygon algorithm,
%{\em BIT}, {\bf 27} (1987), 458--473.

\bibitem {KPT97} G. K\'arolyi, J. Pach and G. T\'oth:
{Ramsey-type results for geometric graphs. I},
{\it Discrete and Computational Geometry} {\bf 18} (1997), 247-255.

\bibitem {KPTV98} G. K\'arolyi, J. Pach, G. T\'oth and P. Valtr:
{Ramsey-type results for geometric graphs. II},
{\it Discrete and Computational Geometry}, {\bf 20} (1998), 375--388.

%\bibitem {Lee89} D. T. Lee:
%Visibility of a simple polygon,
%{\em Comput. Vision, Graphics and Image Process.}, {\bf 22} (1983),
%207--221.

\bibitem {L71} L. Lov\'asz:
On the number of halving lines,
{\it Ann. Univ. Sci. Budapest, E\"otv\"os, Sec. Math.},
{\bf 14} (1971), 107--108.

\bibitem {Mi00} J.~S.~B. Mitchell:
Geometric shortest paths and network optimization,
in {\em Handbook of Computational Geometry}
(J.-R. Sack and J.~Urrutia, editors),
Elsevier Science, Amsterdam, 2000, pp.~633--701.

%\bibitem {OR04} J. O'Rourke: Visibility,
%in {\em Handbook of Discrete and Computational Geometry}
%(J.~Goodman and J. O'Rourke, editors),
%Chapman \& Hall, 2nd edition, 2004, pp.~643--663.

\bibitem {PS85} F. Preparata and M. Shamos:
{\em Computational Geometry: An Introduction},
Springer, New York, 1985.

%\bibitem{P57} R.~C.~Prim: Shortest connection networks and some
%generalizations, {\em Bell System Technical Journal} {\bf 36}
%(1957), 1389--1401.

%\bibitem {T01} G. T\'oth:
%Point sets with many $k$-sets,
%{\it Discrete \& Computational Geometry},
%{\bf 26} (2001), 187--194.

%\bibitem{YB61}
%I.~M.~Yaglom and V.~G.~Boltyanski:
%{\em Convex Figures}, Holt, Rinehart and Winston,
%New York, 1961.

\end{thebibliography}
\end{document}